\newtheorem{lemma}{Lemma}[section]
\newtheorem{proposition}[lemma]{Proposition}
\newtheorem{corollary}[lemma]{Corollary}
\newtheorem{example1}[lemma]{Example}
\newtheorem{rem1}[lemma]{Remark}
\newtheorem{alg1}[lemma]{Algorithm}
\newtheorem{me1}[lemma]{Mechanism}
\newenvironment{remark}{\begin{rem1}\rm}{\end{rem1}}
\newenvironment{example}{\begin{example1}\rm}{\end{example1}}
\numberwithin{equation}{section}
\newcommand{\R}{\mathbb{R}}
\renewcommand{\r}{{\bf r}}
\newcommand{\sm}{\!\setminus\!}
\DeclareMathOperator{\cl}{cl}
\DeclareMathOperator{\co}{co}
\renewcommand{\succ}{\operatorname{succ}}
\DeclareMathOperator{\esssup}{ess\,sup}
\DeclareMathOperator{\essinf}{ess\,inf}
\renewcommand{\O}{\Omega}
\renewcommand{\o}{\omega}
\newcommand{\G}{\mathcal{G}}
\newcommand{\F}{\mathcal{F}}
\newcommand{\Y}{\mathcal{Y}}
\newcommand{\A}{\mathcal{A}}
\renewcommand{\L}{\mathcal{L}}
\renewcommand{\P}{\mathcal{P}}
\newcommand{\N}{\mathbb{N}}
\renewcommand{\Pr}{\mathbb{P}}
\newcommand{\Q}{\mathbb{Q}}
\newcommand{\1}{\mathbf{1}}
\newcommand{\of}[1]{\ensuremath{\left( #1 \right)}}
\newcommand{\cb}[1]{\ensuremath{ \left\{ #1 \right\} }}
\newcommand{\sqb}[1]{\ensuremath{ \left[ #1 \right] }}
\def\prehp(#1,#2){\ensuremath{  #1 \cdot #2 }}
\newcommand{\E}{\mathbb{E}}
\newcommand{\W}{\mathcal{W}}
\newcommand{\ycal}{\mathcal{Y}}
\newcommand{\lcal}{\mathcal{L}}
\newcommand{\T}{\top}
\newcommand{\ind}{\mathbbm{1}}
\begin{document}

\title{Set-Valued Risk Measures as Backward Stochastic Difference Inclusions and Equations}
\author{\c{C}a\u{g}{\i}n Ararat\thanks{Bilkent University, Department of Industrial Engineering, Ankara, Turkey, cararat@bilkent.edu.tr.}\and  Zachary Feinstein\thanks{Stevens Institute of Technology, School of Business, Hoboken, NJ, USA, zfeinste@stevens.edu.}
}
\date{September 13, 2020}
\maketitle
\abstract{
Scalar dynamic risk measures for univariate positions in continuous time are commonly represented as backward stochastic differential equations. In the multivariate setting, dynamic risk measures have been defined and studied as families of set-valued functionals in the recent literature. There are two possible extensions of scalar backward stochastic differential equations for the set-valued framework: (1) backward stochastic differential inclusions, which evaluate the risk dynamics on the selectors of acceptable capital allocations; or (2) set-valued backward stochastic differential equations, which evaluate the risk dynamics on the full set of acceptable capital allocations as a singular object.  In this work, the discrete time setting is investigated with difference inclusions and difference equations in order to provide insights for such differential representations for set-valued dynamic risk measures in continuous time.
}

\section{Introduction}\label{sec:intro}
\subsection{Literature Review}\label{sec:literature}
The seminal work of \cite{artzner} introduced coherent risk measures in an axiomatic framework to provide the capital requirements for financial portfolios.  The coherence axioms were relaxed to that of convexity in \cite{scalarshortfall,fs:sf}. Each of these works considers risk measured at a single time point for frictionless claims. These convex and coherent risk measures were placed in a dynamic system in which the capital requirements for portfolios and contingent claims are updated in time as new information becomes available. The relation of these risks over time is a key property of study; this so-called time consistency property has been studied in, e.g.,~\cite{R04,BN04,DS05,RS05,CDK06}.  

For the purposes of this work, the relevant literature is more specialized.  It has been noted for almost as long as dynamic risk measures have been studied that certain nonlinear ($g$-)expectations have all the properties of a time consistent convex risk measure, and vice versa.  In this way, convex risk measures can be represented as solutions to backward stochastic differential equations (see, e.g., \cite{peng19,BK04,BK09,G06,J08}). 
This representation allows for the detailed study of dynamic risk measures in continuous time as well as for their efficient computation. We take our motivation for this work from \cite{stadje}, which considers the problem of representing a \emph{discrete} time dynamic risk measure as the solution of a backward stochastic difference equation. That work further provides the $L^2$-convergence of the drivers and solutions of backward stochastic difference equations to the corresponding backward stochastic differential equation with desired driver as the number of time steps (in a finite time horizon) grows to infinity.

Scalar risk measures have been extended to consider multivariate portfolios as well.  Often, and with particular relevance to this work, they are studied as set-valued risk measures.  Such set-valued risk measures were first introduced in a coherent setting in \cite{JMT04}.  This was extended to consider convex risk measures in \cite{HHR}. Such risk measures have been applied, primarily, in two settings:
\begin{itemize}
\item \textbf{Portfolios with frictions}: due to transaction costs, valuing a portfolio by a num\'eraire leads to a non-unique value (e.g., mark-to-market or liquidation value).  Each choice of valuation by a num\'eraire results in an intrinsic misspecification of risk, for example: mark-to-market valuation provides an overestimate of the true value of the portfolio and thus an underestimate of the risk; liquidation valuation provides an underestimate of the true value of the portfolio and thus an overestimate of the risk. This is the prototypical case for set-valued risk measures as studied in~\cite{HHR}. Specifically, due to transaction and market-impact costs, there is a misspecification of value in using any num\'eraire to hedge a risky portfolio.  In order to compensate for these frictions, capital requirements in that specific num\'eraire must be larger than the current value of the hedging portfolio.  Set-valued risk measures approach this problem by considering the set of all risk-compensating portfolios, for which there is unlikely to be a uniquely minimal choice, so as to more accurately capture the riskiness of the original portfolio. In order to provide a num\'eraire-free representation, a portfolio is represented by a random vector in the physical units, i.e., the $i$th component of the random vector denotes the number of asset $i$ held in the portfolio.
\item \textbf{Systemic risk}: due to the interaction between banks, the capital requirements for any institution will ultimately depend on the health of all other institutions.  This is presented in a set-valued setting in \cite{syst,syst2}. Specifically, capital can\emph{not} be freely transferred between banks, but due to, e.g., default contagion (see~\cite{EN01}), the capital requirements for one institution intrinsically depend on the health of all other banks.  Due to this coupling of the health of all banks in the system, capital requirements for the different institutions cannot be determined independently.  In particular, this generates an efficient frontier of capital requirements for all banks in the financial system and, as such, such objects are naturally described by the set of all acceptable (system-wide) capital requirements. The collection of wealths for the financial system is presented as a random vector, i.e., the $i$th component of the random vector denotes the wealth of bank $i$ in the system.
\end{itemize}
There are analogs of many scalar risk measures in the set-valued framework.  For instance, superhedging was considered in \cite{LR14}, and shortfall and divergence risk measures were considered in \cite{sdrm}.

The typical analog of time consistency in the set-valued setting is ``multiportfolio time consistency'' as first defined in \cite{tc}. This was shown to be equivalent to a variation of the time consistency property in \cite{bentaharlepinette} for the random set approach in \cite{comparison}.  Additional properties for multiportfolio time consistency were studied in, e.g., \cite{mptc,supermtg,scalarmv,tcscalarmv}, and have been utilized directly for computing such risk measures in discrete time in \cite{algorithm}. Though not the focus of this work, multiportfolio time consistency has been extended to risk measures for processes in \cite{chenhu}.

Time consistency for risk measures, more generally, relates to the dynamic programming principle from optimization. Optimal control problems are typically formulated using backward stochastic differential equations in continuous time. In finance, the relation between time consistency, dynamic programming principle, and backward stochastic differential equations was highlighted in, e.g., \cite{riskarbitrage,timeinconsistency}. By considering multiportfolio time consistency, and the relation to the dynamic programming principle as espoused in~\cite{algorithm}, we wish to further investigate this relationship for set-valued processes.

\subsection{Motivation}\label{sec:motivation}
The backward stochastic differential equation (BSDE) representation of scalar risk measures in continuous time is fundamental to our understanding of how risks propagate over time. We briefly recall this representation. Under the augmented filtration of a standard $m$-dimensional Brownian motion $W=(W(t))_{t\in[0,T]}$, let us consider the solution $(Y^X,Z^X)$ of the BSDE
\begin{equation}\label{riskBSDE}
Y^X(t) = -X+\int_t^T g(s,Z^X(s))ds-\int_t^T Z^X(s)dW(s),\quad t\in[0,T],
\end{equation}
for each terminal value $X$ in the $L^2$-space of $\F_T$-measurable (univariate) random variables. As shown in \cite{G06}, the family of these solutions indexed by the terminal value gives rise to a time consistent dynamic risk measure via $X\mapsto \rho_t(X)\coloneqq Y^X(t)$ provided that the driver function $g\colon \O\times[0,T]\times\R^m\to\R$ is convex in the last variable and satisfies further regularity properties (e.g., Lipschitz in the last variable) to ensure the existence of a unique solution for the BSDE. Conversely, a given dynamic time consistent convex risk measure $\rho=(\rho_t)_{t\in[0,T]}$ for which $\rho_0$ satisfies two additional conditions (i.e., strict monotonicity and dominatedness, see \citet[Proposition 20]{G06} for details) can be represented as $\rho_t(X)=Y^X(t)$ through a family of BSDEs with a common driver $g$ as described above.

In contrast, no such representation is known for set-valued dynamic risk measures for random vectors. As a starting point, one can restrict attention to set-valued risk measures that can be written as the sum of a vector-valued function and the cone of positive random vectors, which is essentially equivalent to defining risk measures as vector-valued functions. In this case, as studied in \cite{X16}, a natural candidate that extends \eqref{riskBSDE} is a multidimensional BSDE, which has the same form as \eqref{riskBSDE} but with a vector-valued driver function $g$ and a vector-valued terminal value $X$. However, the only vector-valued functions that can be seen as risk measures (i.e., monotone, translative, convex) are the trivial ones whose coordinates are risk measures on individual coordinate spaces; this was observed in \cite{X16} in the dynamic setting via BSDE arguments and a more direct proof in the static setting was given recently in \cite{MMnonlinear}. Hence, multidimensional BSDEs have very limited use for the representation of dynamic risk measures in the multivariate framework. This motivates us to consider more general structures that we describe next.

In set-valued stochastic analysis, stochastic differential inclusions and set-valued stochastic differential equations appear as natural ways of extending stochastic differential equations to the set-valued framework. In the forward case, with the developments in set-valued stochastic integration, a theory of such inclusions and set-valued equations was developed in the last two decades; the reader is referred to the book \cite{kisielewicz} and the references therein for a detailed treatment. For set-valued risk measures, we consider the versions of these two structures that move backward in time:
\begin{itemize}
\item \textbf{Backward stochastic differential inclusions}: the vector-valued selectors of a set-valued process propagate backward in time with the path of the specific solution being of fundamental importance; this is a suitable choice if one is interested in characterizing a path moving inside the dynamic set-valued risk measure.
\item \textbf{Set-valued backward stochastic differential equations}: the entire set-valued process propagates backward in time with no specific selector itself deemed to be important; this is a more sophisticated structure as it characterizes the entire set-valued process that corresponds to the dynamic set-valued risk measure.
\end{itemize}
It is the understanding of these representations that motivate us in our current work.

Despite the considerable literature on forward inclusions and set-valued equations, at least in the case of driver functions with compact values, their backward counterparts are quite new in the literature. The few works dealing with backward stochastic differential inclusions are \cite{bsdi,bsdiweak}, which also assumed compact values for the driver. To the best of our knowledge, set-valued backward stochastic differential equations have not been studied in the literature yet. Hence, though we are motivated by the continuous time representations of risk measures described above, such differential inclusions and set-valued differential equations fall beyond the scope of this work.

In order to begin understanding the continuous time setting, we will investigate the discrete time approximation. As such, in this work, we consider backward stochastic difference inclusions (Section~\ref{sec:bsdi}) and set-valued backward stochastic difference equations (Section~\ref{sec:sv-bsde}) in order to understand fundamental questions about the representation of dynamic risk measures via a differential system, e.g., what mathematical form such representation may take and which approach (inclusion or set-valued equation) is more appropriate.  This permits us to both begin the study of continuous time multivariate risk measures and, at the same time, direct the theoretical research towards usable applications.

It should be noted that the fundamental BSDE representation in \eqref{riskBSDE}, which works under the filtration of a multidimensional Brownian motion, already covers a rich class of applications; see \citet[Section 3]{G06}, for instance. In our discrete time framework, we consider a multidimensional Bernoulli random walk as a standard discretization of Brownian motion. This is in line with the approach in \cite{stadje} for univariate risk measures. The multidimensional random walk models the underlying source of randomness and admits a predictable representation property in a generalized sense. This property is the key to having the representations via difference inclusions in Section \ref{sec:bsdi} and via set-valued difference equations in Section \ref{sec:sv-bsde}. In Remark \ref{generalpredrep}, we outline a more general setting where the Bernoulli random walk is replaced with a multidimensional square-integrable martingale in discrete time. In this case, a more abstract version of the predictable representation property is available and can be used in order to extend the results of this paper.

Beyond the immediate applications for risk measures, BSDEs are strongly related with the dynamic programming principle. The mean-variance and mean-risk problems are well-known to, generally, be time-inconsistent and do \emph{not} follow the dynamic programming principle as noted in \cite{timeinconsistency}. However, by considering the multi-objective formulation, \cite{KR20} has found that the dynamic programming principle in the mean-risk problem is satisfied in discrete time. Therefore, the insights we gain on expanding the BSDE representation of the dynamic programming principle to multivariate problems through either backward stochastic differential inclusions or set-valued backward stochastic differential equations is of wider interest and importance.

This work has two primary contributions we wish to highlight.  First, we present backward stochastic difference inclusion and set-valued backward stochastic difference equation representations for set-valued dynamic risk measures.  These are found in Sections~\ref{sec:bsdi} and~\ref{sec:sv-bsde} respectively.  Second, because neither inclusion nor set-valued equation representations have been studied in applications, we use set-valued risk measures in order to determine which generalization of BSDEs is a more promising avenue for considering the dynamic programming principle.  Specifically, as highlighted in this work, backward stochastic differential inclusions appear to be the proper methodology for future studies of the dynamic programming principle.

\section{Background}\label{sec:mptc}
Let $\mathbb{T}$ be some interval of time (discrete or continuous) with finite horizon $T = \sup \mathbb{T} > 0$. Let $(\O,\F,(\F_t)_{t \in \mathbb{T}},\Pr)$ be a filtered probability space satisfying the usual conditions and $d\geq 1$ an integer. Let $\G$ be a sub-$\sigma$-algebra of $\F$. We denote by $L_d^0(\G)$ the space of all $\G$-measurable random variables $X\colon\O\to\R^d$, where two elements are distinguished up to $\Pr$-almost sure equality. For each $p\in[1,+\infty)$, we denote by $L^p_d(\G)$ the set of all random variables $X\in L_d^0(\G)$ for which $\E\sqb{|X|^p}<+\infty$, where $|\cdot|$ is a fixed norm on $\R^d$. We denote by $L^\infty_d(\G)$ the set of all random variables $X\in L_d^0(\G)$ for which $|X|$ is $\Pr$-essentially bounded.  For ease of notation throughout the rest of this paper, we define $L_{d,+}^p(\G) := \cb{X \in L_d^p(\G) \; | \; \Pr(X \in \R^d_+) = 1}$. 

For $C,D\subseteq L_d^0(\F_T)$, the set $C+D\coloneqq \cb{X+Y \; | \; X\in C, Y\in D}$ is the Minkowski sum of $C,D$. If $Z\in L_1^0(\F_T)$, then $ZC\coloneqq CZ\coloneqq \cb{ZX \; | \; X\in C}$ is the set of pointwise products.

Let $p\in[1,+\infty]$. As defined in \cite{tc,mptc}, a conditional risk measure at time $t \in \mathbb{T}$ is a mapping
\[
R_t: L_d^p(\F_T) \to \P_+(L_d^p(\F_t)) \coloneqq \cb{C \subseteq L_d^p(\F_t) \; | \; C = C + L_{d,+}^p(\F_t)}
\]
of random vectors into upper sets.  As described above, we consider risk measures as functions of random vectors; the interpretation of these vectors depends on the context of the problem, i.e., markets with friction, systemic risk, etc.  (If the input can be aggregated into a single random variable ($d = 1$) without loss of information, then univariate risk measures (\cite{R04}) can be applied instead.) Throughout this work, we will focus solely on normalized convex risk measures, i.e., those that are:
\begin{enumerate}
\item normalized: $R_t(0) \neq \emptyset$, $\Pr(\tilde R_t(0) = \R^d) = 0$ (where $\tilde R_t(0)$ is an $\F_t$-measurable random set such that $R_t(0) = \{m \in L_d^p(\F_t) \; | \; \Pr(m \in \tilde R_t(0)) = 1\}$ and which exists due to the properties below and \citet[Theorem 2.1.10]{randomset}), and $R_t(X) = R_t(0) + R_t(X)$ for any $X \in L_d^p(\F_T)$;
\item translative: $R_t(X + m) = R_t(X) - m$ for any $X \in L_d^p(\F_T)$ and $m \in L_d^p(\F_t)$;
\item monotone: $R_t(X) \subseteq R_t(Y)$ for any $X,Y \in L_d^p(\F_T)$ such that $Y - X \in L_{d,+}^p(\F_T)$; 
\item conditionally convex: it holds
\[R_t(\lambda X + (1-\lambda)Y) \supseteq \lambda R_t(X) + (1-\lambda) R_t(Y)\]
for any $X,Y \in L_d^p(\F_T)$ and $\lambda \in L_1^p(\F_t)$ with $\Pr(\lambda \in [0,1]) = 1$; and
\item closed: the set $\operatorname{graph} R_t := \cb{(X,m) \in L_d^p(\F_T) \times L_d^p(\F_t) \; | \; m \in R_t(X)}$ is closed in the product topology and, as a consequence, $R_t(X)$ is closed for every $X \in L_d^p(\F_T)$.
\end{enumerate} 
A dynamic risk measure $R = (R_t)_{t \in \mathbb{T}}$ is a sequence of conditional risk measures over time.

Let $R=(R_t)_{t\in\mathbb{T}}$ be a dynamic risk measure. By construction (see, e.g.,~\citet[Proposition 2.8]{tc} and \citet[Lemma 3.6]{comparison}), $R$ is decomposable, i.e.,
\[
R_{t}(\ind_A X+ \ind_{A^c} Y) = \ind_A R_t(X)+\ind_{A^c} R_t(Y)
\]
for every $X,Y\in L_d^p(\F_{T}), A\in\F_t, t\in\mathbb{T}$, where $\ind_A$ denotes the probabilistic indicator function of event $A$.

\begin{remark}\label{rem:utility}
Though we present all results in this paper for set-valued risk measures, these results apply equally for set-valued utility functionals $U_t:  L_d^p(\F_T) \ni X\mapsto -R_t(X)$. In \cite{hamel2017set}, a (nontranslative) set-valued utility functional is defined based on expected utility and the associated utility maximization problem is studied. An axiomatic treatment of (translative) set-valued utility functionals defined on a set of closed convex random sets is given in \cite{MMnonlinear} under the name \emph{superlinear set-valued expectations}; coherent set-valued risk measures for random vectors can be embedded into this framework. As our constructions rely on some earlier results on set-valued risk measures, e.g., \cite{tc,comparison}, we consider these functionals in the current work for consistency with the most relevant literature.
\end{remark}

With the introduction of dynamic risk measures, the manner in which risks change over time is fundamentally important.  As studied in~\cite{tc,algorithm,bentaharlepinette,chenhu}, multiportfolio time consistency provides the appropriate definition from a mathematical and computational perspective.  A dynamic risk measure $R = (R_t)_{t \in \mathbb{T}}$ is said to be multiportfolio time consistent if, for any $X \in L_d^p(\F_T)$ and $\mathbb{Y} \subseteq L_d^p(\F_T)$,
\[
R_s(X) \subseteq R_s[\mathbb{Y}] := \bigcup_{Y \in \mathbb{Y}} R_s(Y) \; \Rightarrow \; R_t(X) \subseteq R_t[\mathbb{Y}]
\]
for all times $t,s \in \mathbb{T}$ with $t < s$. (To clarify notation, functions $R_t(\cdot)$ are applied to random vectors whereas $R_t[\cdot]$ are applied to sets of random vectors as the union over each element as defined above.) Multiportfolio time consistency states that if every risk compensating portfolio for $X$ at time $s$ also compensates for the risk of some portfolio $Y \in \mathbb{Y}$, then that same relation holds for any time $t < s$.  That is, the risk of \emph{baskets} of portfolios is consistent over time. This version of time consistency is related to the dynamic programming principle through a recursive relation
\begin{equation}\label{eq:mptc-recursive}
R_t(X) = R_{t,s}[-R_s(X)]
\end{equation}
for every $X \in L_d^p(\F_T)$, $t,s \in \mathbb{T}$ with $t < s$. This recursive formulation is constructed via the stepped risk measure $R_{t,s}: L_d^p(\F_s) \to \P_+(L_d^p(\F_t))$, which is defined as the restriction of the conditional risk measure $R_t$ to the domain $L_d^p(\F_s) \subseteq L_d^p(\F_T)$, i.e.,
\[
R_{t,s}(Z) = R_t(Z)
\]
for every $Z \in L_d^p(\F_s)$.

To consider local versions of the difference equations and inclusions, it is preferable to consider the random set approach to conditional risk measures.  By the equivalence between closed decomposable sets of random vectors and closed random sets (see, e.g., \citet[Theorem 2.1.10]{randomset}) as applied on set-valued risk measures in \cite{comparison}, there exists a random set $\tilde R_t(X)$ (i.e., $\tilde R_t(X)(\omega) \subseteq \R^d$ for every $\omega \in \O$) such that
\[
R_t(X) = \cb{m \in L_d^p(\F_t) \; | \; \Pr(m \in \tilde R_t(X)) = 1}
\]
for every $X \in L_d^p(\F_T), t \in \mathbb{T}$.
As the rest of the work is concerned solely with a finitely generated filtration, we will abuse notation to define $R_t(X)(\omega) := \tilde R_t(X)(\omega)$ for every $X \in L_d^p(\F_T), t \in \mathbb{T}, \omega \in \O$. 
These definitions are given likewise for the stepped risk measures $R_{t,s}$ with $t,s \in \mathbb{T}$ and $t < s$. 

\section{Discrete Time Setting}\label{sec:discrete}

In this section, we present the mathematical notation and setting for the discrete time stochastic processes of interest. In particular, we introduce an $m$-dimensional Bernoulli random walk and consider the predictable representation property with respect to this process. We present these results for completeness and to assist in notation later in this paper; more details can be found in \cite{protter}. We wish to highlight the need for the orthogonal martingale term when considering an $m > 1$ dimensional Bernoulli random walk which complicates the setting from, e.g.,~\cite{stadje}.  Such considerations are similarly required if a different random walk was chosen as the driving process; see Remark~\ref{generalpredrep} for more details of this consideration. We focus on the Bernoulli random walk both because it presents a simple mathematical setting that allows for an explicit orthogonal process and because of its ubiquity as a discrete time representation that limits to the Brownian motion.

Let $T>0$ be a fixed time. Let $m,K\geq 1$ be integers and consider a time set $\mathbb{T}=\cb{t_0,\ldots,t_K}$, where $0=t_0<t_1<\ldots<t_{K-1}<t_K=T$. Let us write $\Delta t_k\coloneqq t_{k}-t_{k-1}$ for each $k\in\cb{1,\ldots,K}$. Let $(\O,\F,\Pr)$ be a probability space on which there exist independent $m$-dimensional random vectors $B(1),\ldots,B(K)\colon\O\to\cb{-1,+1}^m$, where the components of $B(k)=(B_{1}(k),\ldots,B_{m}(k))^\T$ are possibly correlated symmetric Rademacher random variables for each $k\in\cb{1,\ldots,K}$. We define an $m$-dimensional random walk $M=(M_1(t),\ldots,M_m(t))_{t\in\mathbb{T}}$ by $M_i(t_0)\coloneqq 0$ and
\begin{equation}\label{randomwalkdefn}
M_i(t_k) \coloneqq M_i(t_{k-1}) + \sqrt{\Delta t_{k}} B_i(k)
\end{equation}
for each $k\in\cb{1,\ldots,K}$ and $i\in\cb{1,\ldots,m}$.  For notational simplicity, let
\[
\Delta M(t_k) \coloneqq M(t_k) - M(t_{k-1}) = \sqrt{\Delta t_k}B(k).
\]
Let $(\F^M_{t})_{t\in\mathbb{T}}$ be the natural filtration of $M$, that is, $\F^M_{t_0}=\cb{\emptyset,\Omega}$ and
\[
\F^M_{t_k}=\sigma(M(t_1),\ldots,M(t_k))=\sigma(B(1),\ldots,B(k))
\]
for each $k\in\cb{1,\ldots,K}$.

Let $k\in\cb{1,\ldots,K}$. Since $\F^M_{t_k}$ is generated by finitely many events, every real-valued random variable that is measurable with respect to $\F^M_{t_k}$ is bounded. Hence, $L_d^0(\F^M_{t_k})=L^p_d(\F^M_{t_k})$ for each $p\in[1,+\infty]$. 
To simplify notation for the remainder of this work, we define $\lcal_{t_k}^d := L_d^0(\F^M_{t_k})$.
When $m=1$, every $X\in \lcal_{t_k}^d$ can be written as
\begin{equation}\label{predrep}
X=\xi+\psi \Delta M(t_{k})
\end{equation}
for some $\xi,\psi\in \lcal_{t_{k-1}}^d$ (see, e.g., \cite{fs:sf}). This is called the \emph{predictable representation property} of the one-dimensional random walk $M$. For $m\geq 2$, the analogous representation
\begin{equation}\label{predrepwrong}
\xi+\sum_{i=1}^m\psi_i \Delta M_i(t_k)
\end{equation}
with $\xi,\psi_1,\ldots,\psi_m\in \lcal_{t_{k-1}}^d$ \emph{fails} to equal $X\in \lcal_{t_k}^d$, in general.

Instead of \eqref{predrepwrong}, a proper generalization of \eqref{predrep} should take into account the ``cross-terms" created by the $m$ components of the vector-valued random walk $M$ (i.e., $M_1,\ldots,M_m$). To that end, let us denote by $\mathcal{I}$ the set of all nonempty subsets of $\cb{1,\ldots,m}$. For each $ I\in\mathcal{I}$, we define a process $M_I=(M_I(t))_{t\in\mathbb{T}}$ by $M_I(t_0)\coloneqq 0$ and
\[
\Delta M_I(t_k)\coloneqq M_I(t_k)-M_I(t_{k-1})=\sqrt{\Delta t_k}B_I(k),\; \text{where}\; B_I(k)\coloneqq \bigtimes_{i\in I}B_i(k)
\]
for each $k\in\cb{1,\ldots,K}$, by a slight abuse of notation. In addition to the earlier processes $M_{\cb{i}}=M_i$, $i\in\cb{1,\ldots,m}$, this definition creates $2^m-m-1$ new processes. Moreover, when the components of $B(k)$ are independent for each $k\in\cb{1,\ldots,K}$, all the $2^m-1$ processes are martingales that are orthogonal. The new form of predictable representation is stated in the next lemma.

\begin{lemma}\label{genpredrep}
	Let $k\in\cb{1,\ldots,K}$. Every $Y\in \lcal_{t_k}^d$ can be written as
	\[
	Y=\xi+\sum_{I\in\mathcal{I}}\psi_{I}\Delta M_I(t_k)
	\]
	for some $\xi,\psi_I\in \lcal_{t_{k-1}}^d$, $I\in\mathcal{I}$.
	\end{lemma}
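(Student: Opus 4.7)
The plan is to exploit the fact that $\F_{t_k}^M$ is generated by $\F_{t_{k-1}}^M$ together with the single random vector $B(k)$, which takes values in the finite hypercube $\{-1,+1\}^m$. The functions $\{b\mapsto\prod_{i\in I}b_i : I\subseteq\{1,\ldots,m\}\}$ form a Walsh/Rademacher basis for all real-valued functions on $\{-1,+1\}^m$, so conditionally on $\F_{t_{k-1}}^M$ every $Y\in\lcal_{t_k}^d$ can be expanded in this basis, with coefficients that are necessarily $\F_{t_{k-1}}^M$-measurable. Rescaling each $B_I(k)$ by $\sqrt{\Delta t_k}$ to obtain $\Delta M_I(t_k)$ then produces the claimed representation, and the constant term of the expansion plays the role of $\xi$.

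Concretely, the first step is the atomic decomposition
\[
Y \;=\; \sum_{b\in\{-1,+1\}^m} Y_b\,\ind_{\{B(k)=b\}},
\]
where each coefficient $Y_b\in\lcal_{t_{k-1}}^d$ is the unique $\F_{t_{k-1}}^M$-measurable random vector agreeing with $Y$ on $\{B(k)=b\}$ (for $b$ outside the support of $B(k)$, $Y_b$ may be set to zero). The second step uses the identity $\ind_{\{B_i(k)=b_i\}}=(1+b_iB_i(k))/2$ to write
\[
\ind_{\{B(k)=b\}} \;=\; \prod_{i=1}^m \frac{1+b_iB_i(k)}{2} \;=\; \frac{1}{2^m}\Big(1+\sum_{I\in\mathcal{I}}\Big(\prod_{i\in I}b_i\Big)B_I(k)\Big).
\]
Substituting this into the atomic decomposition and exchanging the two sums produces
\[
Y \;=\; \frac{1}{2^m}\sum_{b}Y_b \;+\; \sum_{I\in\mathcal{I}}\Big(\frac{1}{2^m}\sum_{b}Y_b\prod_{i\in I}b_i\Big)B_I(k),
\]
so that setting $\xi\coloneqq 2^{-m}\sum_b Y_b$ and $\psi_I\coloneqq(2^m\sqrt{\Delta t_k})^{-1}\sum_b Y_b\prod_{i\in I}b_i$ (using $B_I(k)=\Delta M_I(t_k)/\sqrt{\Delta t_k}$) yields the claimed form with $\xi,\psi_I\in\lcal_{t_{k-1}}^d$.

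The only step requiring any care is the justification of the atomic decomposition, i.e., that the coefficients $Y_b$ can be chosen $\F_{t_{k-1}}^M$-measurably. This is essentially routine because $\F_{t_k}^M=\F_{t_{k-1}}^M\vee\sigma(B(k))$ and $\sigma(B(k))$ is finitely generated, so the atoms of $\F_{t_k}^M$ over $\F_{t_{k-1}}^M$ are exactly the sets $\{B(k)=b\}$; thus $Y$ is constant on each such atom as a function of the remaining randomness, making $Y_b$ well defined as an $\F_{t_{k-1}}^M$-measurable random vector. Once this structural fact is in place, the Walsh expansion and the rearrangement above are purely algebraic and involve no further analytic difficulty. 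No uniqueness of the representation is claimed, which is fortunate because linear dependencies may occur when the joint law of the components of $B(k)$ is degenerate.
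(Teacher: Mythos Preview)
Your proof is correct and follows essentially the same route as the paper: both arguments expand the indicator $\ind_{\{B(k)=b\}}$ via $\prod_i(1+b_iB_i(k))/2$ to obtain the Walsh/Rademacher basis representation, and then read off $\xi$ and the $\psi_I$ explicitly. The only organizational difference is that the paper first treats the case $Y\in L_1^0(\sigma(B(k)))$ with deterministic coefficients and then bootstraps to general $Y\in\lcal_{t_k}^d$ by summing over the atoms of $\F^M_{t_{k-1}}$, whereas you carry the $\F^M_{t_{k-1}}$-measurable coefficients $Y_b$ through from the outset; this makes your version slightly more streamlined but is otherwise the same argument.
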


\begin{remark}
	For $d=1$, the result appears as Lemma~6.1 of a preprint version of \cite{cheriditohorst}; its proof there is based on a spanning argument for the finite-dimensional vector space $\lcal_{t_k}^1$ assuming $k=1$. We provide a more elementary and complete proof below with an explicit derivation of the predictable representation. Unlike the proof in the preprint version of \cite{cheriditohorst}, our proof is also valid without assuming that the components of the random walk are independent.
	\end{remark}

\begin{proof}[Proof of Lemma~\ref{genpredrep}]
	Let us assume that $d=1$. As a first step, let $Y\in L_1^0(\sigma(B(k)))$. Note that $\sigma(B(k))$ is generated by $2^m$ events of the form $\cb{B(k)=b}$ with $b=(b_1,\ldots,b_m)\in \cb{-1,+1}^m$ that partition $\O$. Hence, we may write
\[
Y =\sum_{b\in\cb{-1,+1}^m}c_{b}\ind_{\cb{B(k)=b}}=  \sum_{b\in\cb{-1,+1}^m}c_{b}\prod_{i=1}^m \ind_{\cb{B_i(k)=b_i}}
\]
for some $c_b\in\R$, $b\in\cb{-1,+1}^m$. Note that $\ind_{\cb{B_i(k)=b_i}}=\frac{1+b_i B_i(k)}{2}$ for each $i\in\cb{1,\ldots,m}$ and $b\in\cb{-1,+1}^m$. Hence,
\[
Y= \sum_{b\in\cb{-1,+1}^m}c_{b}\prod_{i=1}^m \frac{1+b_i B_i(k)}{2}.
\]
One can rewrite the above sum in the form of a polynomial of $B_1(k),\ldots,B_m(k)$. The constant term of this polynomial is
\[
\xi\coloneqq \frac{1}{2^m}\sum_{b\in\cb{-1,+1}^m}c_b.
\]
On the other hand, each functional term of the polynomial is of the form $\tilde{\psi}_I B_{I}(k)$, where $I\in\mathcal{I}$ and
\[
{\tilde{\psi}}_I \coloneqq \frac{1}{2^m}\sum_{b\in\cb{-1,+1}^m} c_b\prod_{i\in I}b_i.
\]
Taking $\psi_I\coloneqq \frac{\tilde{\psi}_I}{\sqrt{\Delta t_k}}$ for each $I\in\mathcal{I}$, we obtain
\[
Y=\xi+\sum_{I\in\mathcal{I}}\tilde{\psi}_{I}B_I(k)=\xi+\sum_{I\in\mathcal{I}}\psi_I \Delta M_I(t_k).
\]

If $k=1$, then the conclusion of the lemma follows by the first step since $\F_{t_1}^M=\sigma(B(1))$. Next, suppose that $k\in\cb{2,\ldots,K}$ and let $Y\in \lcal_{t_k}^1$. Hence, we may write
\[
Y=f\circ(B(1),\ldots,B(k))
\]
for some (Borel measurable) function $f\colon\cb{-1,+1}^{km}\to \R$. In particular,
\begin{equation}\label{Xrep}
Y=\sum_{b(1)\in\cb{-1,+1}^m}\ldots\sum_{b(k-1)\in\cb{-1,+1}^m}f\circ(b(1),\ldots,b(k-1),B(k))\ind_{\cb{B(1)=b(1),\ldots,B(k-1)=b(k-1)}}.
\end{equation}
Let $b(1),\ldots,b(k-1)\in \cb{-1,+1}^m$. Since $f\circ (b(1),\ldots,b(k-1),B(k))\in L_1^0(\sigma(B(k)))$, by the first step, there exist $g(b(1),\ldots,b(k-1))\in\R$, $h_I(b(1),\ldots,b(k-1))\in\R$ for each $I\in \mathcal{I}$ such that
\begin{equation}\label{fgh}
f\circ (b(1),\ldots,b(k-1),B(k))=g(b(1),\ldots,b(k-1))+\sum_{I\in\mathcal{I}}h_I(b(1),\ldots,b(k-1))\Delta M_I(t_k).
\end{equation}
Then, \eqref{Xrep} and \eqref{fgh} imply that
\[
Y = g\circ (B(1),\ldots,B(k-1))+\sum_{I\in\mathcal{I}}h_I\circ(B(1),\ldots,B(k-1))\Delta M_I(t_k).
\]
The functions $g,h_I\colon \cb{-1,+1}^{(k-1)m}\to\R$, $I\in\mathcal{I}$, are Borel measurable since they are defined on finite sets. Hence, by taking
\[
\xi\coloneqq g\circ (B(1),\ldots,B(k-1))\in \lcal_{t_{k-1}}^1, \quad \psi_I\coloneqq h_I\circ (B(1),\ldots,B(k-1))\in \lcal_{t_{k-1}}^1,\ I\in\mathcal{I},
\]
the conclusion of the lemma follows for $d=1$.

For arbitrary $d\geq 1$, applying the result for $d=1$ to each component of $Y\in \lcal_{t_k}^d$ gives the claimed representation.
	\end{proof}

\begin{remark}\label{generalpredrep}
	One can consider a more general setting in which $M$ is a square-integrable martingale such as a symmetric random walk with an arbitrary distribution for its increments, e.g., $M$ can be a symmetric Gaussian random walk. In this case, the simple predictable representation in \eqref{predrepwrong} fails to hold in general even for the case $m=1$. Nevertheless, when the $m$ components of $M$ are independent, by the results in \citet[Section IV.3]{protter}, one has the more general form
	\begin{equation}\label{protterpredrep}
	X = \xi + \sum_{i=1}^m\psi_i\Delta M_i(t_k) + \Delta N(t_k),
	\end{equation}
	where $\xi,\psi_1,\ldots,\psi_m\in\lcal_{t_{k-1}}^d$, $\Delta N(t_k)=N(t_k)-N(t_{k-1})$ and $N=(N_1(t),\ldots,N_d(t))_{t\in\mathbb{T}}$ is a square-integrable martingale that is orthogonal to $M$, i.e., $\E\sqb{\Delta M_i(t_k)\Delta N_j(t_k)}=0$ for every $i\in\cb{1,\ldots,m}$, $j\in\cb{1,\ldots,d}$, $k\in\cb{1,\ldots,K}$. The representation in Lemma \ref{genpredrep} above can be seen as a special case of \eqref{protterpredrep} for the Bernoulli random walk but it also provides the explicit structure of the orthogonal term as
	\begin{equation}\label{exporthogonal}
	\Delta N(t_k)=\sum_{I\in\mathcal{I}\backslash\cb{\cb{1},\ldots,\cb{m}}}\psi_I\Delta M_I(t_k).
	\end{equation}
	When $M$ is a general square-integrable martingale, \citet[Proposition 2.1]{jacod} has an explicit formula for the integrands $\psi_1,\ldots,\psi_m$; however, the structure in \eqref{exporthogonal} for the orthogonal term is very specific to the Bernoulli case. (Similarly, for a Gaussian random walk, the orthogonal term in \eqref{protterpredrep} is referred to as the \emph{martingale representation error} in \cite{discreteocone} and is made explicit via a discrete version of Clark-Ocone formula \citep[Theorem 2.1]{discreteocone}.) The Bernoulli random walk makes it possible to express the sum of the stochastic integral (with respect to $M$) and the orthogonal terms in \eqref{protterpredrep} compactly as $\sum_{I\in\mathcal{I}}\psi_I\Delta M_I(t_k)$, which simplifies notation in the backward stochastic difference inclusions and set-valued backward stochastic difference equations to follow. On the other hand, when passing to continuous time, we may use a sequence of Bernoulli random walks in order to approximate Brownian motion, which is the underlying process of the fundamental BSDE representations in continuous time (see \eqref{riskBSDE}). For these reasons, we prefer to work within the Bernoulli framework and use the predictable representation in Lemma \ref{genpredrep}. 
	\end{remark}

We conclude this section with an illustrative example to be developed further in the later sections.

\begin{example}\label{exmodel1}
	We consider a simple market model with proportional transaction costs. For notational simplicity, we assume that $t_k=k$ so that $\Delta t_k=1$ for each $k\in\cb{1,\ldots,K}$. The market is driven by $m=d-1$ symmetric and correlated random walks, and it has $d$ assets. Asset $0$ is a riskless bond with no transaction costs; its (deterministic) price process $(A_t)_{t\in\mathbb{T}}$ is given by $A_t=A_0(1+r)^t$, $t\in\cb{1,\ldots,K}$, where $A_0>0$ is the initial bond price and $r>-1$ is the periodic interest rate. To simplify the presentation, let us assume that $A_0=1$ and $r=0$ so that $A_t=1$ for each $t\in\cb{1,\ldots,K}$. In addition, for each $i\in\cb{1,\ldots,d-1}$, asset $i$ is a risky stock with proportional transaction costs; its mid-market price process $(S^i_t)_{t\in\mathbb{T}}$ (denoted in terms of the num\'{e}raire of the bond) follows a Cox-Ross-Rubinstein model given by
	\[
	S^i_t = S^i_0e^{\sigma_i M_i(k)},\quad t\in\cb{1,\ldots,K},
	\]
	where $S^i_0>0$ is the initial mid-market price and $\sigma_i>0$ is the volatility coefficient. Furthermore, there is a fixed transaction cost rate $\lambda_i\geq 0$ for the stock, which is the same for all periods for simplicity. Hence, the bid and ask prices at time $t\in\mathbb{T}$ are given by
	\[
	S_t^{i,b}=S^i_t(1-\lambda_i),\quad S_t^{i,a} = S^i_t(1+\lambda_i),
	\]
	respectively. Let us define the price (row) vectors
	\[
	S_t^b\coloneqq (S_t^{1,b},\ldots,S_t^{d-1,b}),\quad  S_t^a\coloneqq (S_t^{1,a},\ldots,S_t^{d-1,a}).
	\]
	
	For simplicity, we assume that all transactions are performed via the bond, that is, a direct conversion between any two risky assets is not allowed. For each $t\in\mathbb{T}$ and $\o\in\O$, we denote by $K_t(\o)$ the solvency cone of all $d$-dimensional portfolio vectors that can be exchanged into positive portfolio vectors. In our case, $K_t(\o)$ is the convex cone generated by the columns of the matrix (see \citet[Section 7]{algorithm} and \citet[Section 5]{LR14})
	\begin{equation}\label{columns}
	\begin{pmatrix}-S_t^b(\o) & S_t^a(\o)\\ I_{d-1}& -I_{d-1}\end{pmatrix},
	\end{equation}
	where $I_{d-1}$ is the $(d-1)\times(d-1)$ identity matrix. From the structure of the Cox-Ross-Rubinstein model, it is clear that $K_t(\o)$ depends on $\o$ only through $M(t)(\o)$, the realization of the underlying $m$-dimensional random walk. Hence, by a slight abuse of notation, we write
	\begin{equation}\label{Ktnotation}
	K_t(\o)=K_t(c)
	\end{equation}
	whenever $M(t)(\o)=c$.
	
	When $d=2$, we write $\sigma=\sigma_1, S_t^{b}=S_t^{1,b}, S_t^a=S_t^{1,a}$ and, from \eqref{columns}, we have
	\[
	K_t(\o)=\co\cb{\begin{pmatrix}-S_t^{b}(\o)\\1\end{pmatrix},\begin{pmatrix} S_t^{a}(\o)\\ -1\end{pmatrix}},
	\]
where $\co$ denotes the convex conic hull operator; when $d=3$, we have
	\[
	K_t(\o)=\co\cb{\begin{pmatrix}-S_t^{1,b}(\o) \\ 1\\ 0\end{pmatrix},\begin{pmatrix}-S_t^{2,b}(\o) \\ 0 \\ 1\end{pmatrix},\begin{pmatrix} S_t^{1,a}(\o)\\ -1 \\ 0\end{pmatrix},\begin{pmatrix} S_t^{2,a}(\o)\\ 0\\ -1\end{pmatrix}}.
	\]

For each $t\in\mathbb{T}$, let us define $L_d^0(\F^M_t,K_t)=\{Y\in\lcal_{t}^d \; | \; \Pr\{\o\in\O\mid Y(\o)\in K_t(\o)\}=1\}$, the set of all measurable selectors of the random set $K_t$. Let $C_{t,T}=-\sum_{s=t}^T L_d^0(\F^M_s,K_s)$, which is the set of all self-financing portfolios in this market. Then, for a multidimensional claim $X\in\lcal_t^d$, the set
\[
SHP_t(X)\coloneqq \cb{Y\in\lcal_t^d \; | \; X\in Y+C_{t,T}}
\]
is the set of all superhedging portfolios of $X$ at time $t\in\mathbb{T}$. Letting
\[
R^{SHP}_t(X)=SHP_t(-X),
\]
the family $R^{SHP}=(R^{SHP}_t)_{t\in\mathbb{T}}$ is a coherent multiportfolio time consistent dynamic risk measure as studied in \citet[Example 5.4]{mptc}.
	\end{example}

\section{Backward Stochastic Difference Inclusion}\label{sec:bsdi}

In this section, we show that a given dynamic set-valued risk measure in discrete time gives rise to a backward stochastic difference inclusion (BS$\Delta$I). Since we do not consider the continuous time limits of these risk measures, we derive a BS$\Delta$I without scaling and tilting the original risk measure as was done in \cite{stadje}. BS$\Delta$Is present a recursive formulation for the computation of the set-valued risk measure as a collection of selectors.  As such, the proposed BS$\Delta$I provides a methodology for studying singular capital allocation strategies over time to manage risk. Because a risk manager would only ever implement a single strategy, such a construction provides exactly the dynamics an investor would need to consider.  Therefore, in this section, we are interested in the backward stochastic difference inclusion that encodes the dynamic programming principle of multiportfolio time consistency.

In the setting of Section~\ref{sec:discrete}, we work with the filtered probability space $(\O,\F, (\F^M_{t})_{t\in\mathbb{T}}, \Pr)$, where $M$ is the $m$-dimensional random walk defined in \eqref{randomwalkdefn}. Let us consider a multiportfolio time consistent dynamic set-valued convex risk measure $R=(R_{t})_{t\in\mathbb{T}}$ with one-step conditional risk measures $R_{t_{k-1},t_{k}}\colon \lcal_{t_{k}}^d\to \P_+(\lcal_{t_{k-1}}^d)$, $k\in\cb{1,\ldots,K}$. For the terminal risk measure $R_{t_K}\colon \lcal_{t_K}^d\to \P_+(\lcal_{t_K}^d)$, note that we have $R_{t_K}(X)=-X+R_{t_K}(0)$ for each $X\in \lcal_{t_K}^d$.

We first relate $R$ to a BS$\Delta$I with a \emph{nonlocal} driver. To that end, let us introduce the domain
\begin{equation}\label{dominc}
\mathbb{D} \coloneqq \cb{(t,\psi) \; | \; t \in \cb{t_0,\ldots,t_{K-1}}, \; \psi=(\psi_I)_{I\in\mathcal{I}} \in (\lcal_{t}^d)^{\mathcal{I}}}
\end{equation}
and define the set-valued driver $G\colon\mathbb{D}\to 2^{\lcal^d}$ by
\begin{equation}\label{nonlocaldefn}
G\of{t_{k-1},\psi} \coloneqq  \frac{1}{\Delta t_{k}}R_{t_{k-1},t_k}\of{-\sum_{I\in\mathcal{I}} \psi_I\Delta M_I(t_k)}
\end{equation}
for each $(t_{k-1},\psi)\in\mathbb{D}$ with $k\in\cb{1,\ldots,K}$. Note that $G$ is adapted, i.e., $G(t,\psi)\in \P_+(\lcal_t^d)$ whenever $(t,\psi)\in\mathbb{D}$.

\begin{proposition}\label{nonlocalprop}
	Let $k\in\cb{1,\ldots,K}$ and $Y(t_k)\in \lcal_{t_k}^d$. Consider the one-step BS$\Delta$I
	\begin{equation}\label{1step}
	Y(t_{k-1})\in Y(t_k)+G(t_{k-1},\psi(t_{k-1}))\Delta t_k-\sum_{I\in\mathcal{I}}\psi_I(t_{k-1})\Delta M_I(t_k)
	\end{equation}
    for some $\psi(t_{k-1})\in (\lcal_{t_{k-1}}^d)^{\mathcal{I}}$.
	Then, the set $R_{t_{k-1},t_{k}}(-Y(t_k))$ coincides with the reachable set of \eqref{1step}, that is,
	\begin{equation}\label{reachable1step}
	R_{t_{k-1},t_k}(-Y(t_k))=\cb{Y(t_{k-1})\in\lcal_{t_{k-1}}^d \; | \; \text{\eqref{1step} holds for some }\psi(t_{k-1})\in(\lcal_{t_{k-1}}^d)^{\mathcal{I}}}.
	\end{equation}
\end{proposition}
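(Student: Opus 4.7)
The plan is to establish the two set inclusions in \eqref{reachable1step} separately. Both directions rest on the predictable representation from Lemma~\ref{genpredrep} and on the translativity of $R_{t_{k-1},t_k}$ along $\lcal_{t_{k-1}}^d$.

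For $R_{t_{k-1},t_k}(-Y(t_k)) \subseteq$ (reachable set), I would first invoke Lemma~\ref{genpredrep} to decompose $Y(t_k) = \xi + \sum_{I \in \mathcal{I}} \psi_I \Delta M_I(t_k)$ with $\xi,\psi_I \in \lcal_{t_{k-1}}^d$. Setting $\psi(t_{k-1}) := \psi$, translativity (with the $\F_{t_{k-1}}^M$-measurable shift $-\xi$) combined with the definition of $G$ yields
\[
R_{t_{k-1},t_k}(-Y(t_k)) = R_{t_{k-1},t_k}\!\left(-\sum_{I \in \mathcal{I}} \psi_I \Delta M_I(t_k)\right) + \xi = G(t_{k-1},\psi)\,\Delta t_k + \xi.
\]
Hence any $Y(t_{k-1}) \in R_{t_{k-1},t_k}(-Y(t_k))$ has the form $\xi + g\,\Delta t_k$ for some $g \in G(t_{k-1},\psi)$, and rewriting $\xi = Y(t_k) - \sum_{I} \psi_I \Delta M_I(t_k)$ exhibits $Y(t_{k-1})$ as an element of the reachable set defined by \eqref{1step}.

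For the reverse inclusion, I suppose $Y(t_{k-1})$ satisfies \eqref{1step}, so there exist $\psi_I(t_{k-1}) \in \lcal_{t_{k-1}}^d$ and $g \in G(t_{k-1},\psi(t_{k-1}))$ with $Y(t_{k-1}) = Y(t_k) + g\,\Delta t_k - \sum_{I} \psi_I(t_{k-1})\Delta M_I(t_k)$. The key observation is that the seemingly $\F_{t_k}^M$-measurable combination $\eta := Y(t_k) - \sum_{I} \psi_I(t_{k-1})\Delta M_I(t_k)$ is in fact an element of $\lcal_{t_{k-1}}^d$, since it coincides with $Y(t_{k-1}) - g\,\Delta t_k$, a difference of two $\F_{t_{k-1}}^M$-measurable vectors. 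Translativity then gives
\[
R_{t_{k-1},t_k}(-Y(t_k)) = R_{t_{k-1},t_k}\!\left(-\sum_{I \in \mathcal{I}} \psi_I(t_{k-1})\Delta M_I(t_k) - \eta\right) = G(t_{k-1},\psi(t_{k-1}))\,\Delta t_k + \eta,
\]
and since $g\,\Delta t_k$ lies in $G(t_{k-1},\psi(t_{k-1}))\,\Delta t_k$, we conclude $Y(t_{k-1}) = \eta + g\,\Delta t_k \in R_{t_{k-1},t_k}(-Y(t_k))$.

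The main conceptual obstacle is precisely this measurability realization in the $\supseteq$ direction: translativity is a priori unavailable because $\sum_I \psi_I(t_{k-1})\Delta M_I(t_k)$ is not $\F_{t_{k-1}}^M$-measurable, yet the specific combination forced upon us by the inclusion \eqref{1step} is measurable, which legitimizes the translation and completes the argument. Everything else—the decomposition via Lemma~\ref{genpredrep} and the algebraic bookkeeping of the Minkowski-sum translations—is routine, and so I would not expect serious additional difficulties beyond carefully tracking the set-valued arithmetic.
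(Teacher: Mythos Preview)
Your proof is correct and follows essentially the same route as the paper's: both directions rest on the predictable representation (Lemma~\ref{genpredrep}) and translativity, and your ``key observation'' that $\eta = Y(t_k) - \sum_I \psi_I(t_{k-1})\Delta M_I(t_k)$ is $\F_{t_{k-1}}^M$-measurable is exactly the paper's definition of $\xi(t_{k-1}) \coloneqq Y(t_{k-1}) - V(t_{k-1})\Delta t_k$. The only cosmetic difference is that the paper organizes the $\subseteq$ direction by computing $Y(t_{k-1}) - Y(t_k)$ rather than isolating $\xi$ first, but the content is identical.
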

\begin{proof}
	Let $Y(t_{k-1})\in R_{t_{k-1},t_{k}}(-Y(t_{k}))$. By Lemma~\ref{genpredrep}, $Y(t_k)$ has the predictable representation
	\begin{equation}\label{predrep2}
	Y(t_k) = \xi(t_{k-1})+\sum_{I\in\mathcal{I}} \psi_I(t_{k-1})\Delta M_I(t_k)
	\end{equation}
	for some $\xi(t_{k-1}), \psi_I(t_{k-1})\in \lcal_{t_{k-1}}^d, I\in\mathcal{I}$. By \eqref{nonlocaldefn}, we have
	\begin{align*}
	&Y(t_{k-1})-Y(t_{k})\\
	&\in R_{t_{k-1},t_{k}}(-Y(t_k))-Y(t_k)\\
	&= R_{t_{k-1},t_k}\of{-\xi(t_{k-1}) -\sum_{I\in\mathcal{I}} \psi_I(t_{k-1})\Delta M_I(t_k)}-\xi(t_{k-1}) -\sum_{I\in\mathcal{I}} \psi_I(t_{k-1})\Delta M_I(t_k)\\
	&= R_{t_{k-1},t_k}\of{-\sum_{I\in\mathcal{I}} \psi_I(t_{k-1})\Delta M_I(t_k)}-\sum_{I\in\mathcal{I}} \psi_I(t_{k-1})\Delta M_I(t_k)\\
	&= G\of{t_{k-1},\psi(t_{k-1})}\Delta t_{k}-\sum_{I\in\mathcal{I}} \psi_I(t_{k-1})\Delta M_I(t_k).
	\end{align*}
	Hence, \eqref{1step} holds.

	Conversely, let $Y(t_{k-1})\in\lcal_{t_{k-1}}^d$ be such that the BS$\Delta$I \eqref{1step} holds for some $\psi(t_{k-1})\in (\lcal_{t_{k-1}}^d)^{\mathcal{I}}$. By the BS$\Delta$I, there exists $V(t_{k-1})\in G(t_{k-1},\psi(t_{k-1}))$ such that
	\[
	Y(t_{k-1}) = Y(t_k) +  V(t_{k-1})\Delta t_{k}-\sum_{I\in\mathcal{I}}\psi_I(t_{k-1})\Delta M_I(t_k),
	\]
	that is, $Y(t_k)$ has the predictable representation
	\[
	Y(t_{k}) = \xi(t_{k-1})+\sum_{I\in\mathcal{I}}\psi_I(t_{k-1})\Delta M_I(t_k),
	\]
	where $\xi(t_{k-1})\coloneqq Y(t_{k-1}) -V(t_{k-1})\Delta t_{k}\in\lcal_{t_{k-1}}^d$. Recalling the definition of the driver in \eqref{nonlocaldefn}, the BS$\Delta$I yields
	\begin{align*}
	Y(t_{k-1})&\in Y(t_k)+R_{t_{k-1},t_k}\of{-\sum_{I\in\mathcal{I}} \psi_I(t_{k-1})\Delta M_I(t_k)}-\sum_{I\in\mathcal{I}}\psi_I(t_{k-1})\Delta M_I(t_k)\\
	&=Y(t_k)+R_{t_{k-1},t_k}\of{-Y(t_{k}) +\xi(t_{k-1})}-\sum_{I\in\mathcal{I}}\psi_I(t_{k-1})\Delta M_I(t_k)\\
	&=Y(t_k)+R_{t_{k-1},t_k}\of{-Y(t_{k}) }- \xi(t_{k-1})-\sum_{I\in\mathcal{I}}\psi_I(t_{k-1})\Delta M_I(t_k)\\
	&=Y(t_k)+R_{t_{k-1},t_k}\of{-Y(t_{k}) }- Y(t_k)=R_{t_{k-1},t_k}\of{-Y(t_{k}) },
	\end{align*}
	which completes the proof.
	\end{proof}

\begin{corollary}\label{nonlocalcor}
Let $X\in \lcal_{t_K}^d$. If $(Y(t))_{t\in\mathbb{T}}$ is a process such that $Y(t_K)\in R_{t_K}(X)$ and $Y(t_{k-1})\in R_{t_{k-1},t_{k}}(-Y(t_{k}))$ for each $k\in\cb{1,\ldots,K}$, then there exists $\psi(t_{k-1})\in (\lcal_{t_{k-1}}^d)^{\mathcal{I}}$ for each $k\in\cb{1,\ldots,K}$ such that the BS$\Delta$I
	\begin{align*}
	&Y(t_{k-1})\in Y(t_k)+G(t_{k-1},\psi(t_{k-1}))\Delta t_k-\sum_{I\in\mathcal{I}}\psi_I(t_{k-1})\Delta M_I(t_k),\quad k\in\cb{1,\ldots,K},\\
	& Y(t_K)\in -X+R_{t_K}(0)
	\end{align*}
	holds. Conversely, if there exist a $d$-dimensional adapted process $(Y(t))_{t\in\mathbb{T}}$ and $\psi(t_{k-1})\in (\lcal_{t_{k-1}}^d)^{\mathcal{I}}$ for each $k\in\cb{1,\ldots,K}$ such that the above BS$\Delta$I holds, then $Y(t_K)\in R_{t_K}(X)$ and $Y(t_{k-1})\in R_{t_{k-1},t_k}(-Y(t_k))$ for each $k\in\cb{1,\ldots,K}$. In each case, the multi-step version of the BS$\Delta$I
\begin{align*}
&Y(t_k)\in Y(t_K)+\sum_{\ell=k+1}^K G(t_{\ell-1},\psi(t_{\ell-1}))\Delta t_\ell-\sum_{\ell=k+1}^K\sum_{I\in\mathcal{I}}\psi_I(t_{\ell-1})\Delta M_I(t_\ell),\quad k\in\cb{0,\ldots,K-1},\\
& Y(t_K)\in -X+R_{t_K}(0)
\end{align*}
holds as well.
\end{corollary}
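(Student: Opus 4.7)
The plan is to iterate Proposition~\ref{nonlocalprop} across the $K$ time steps. That proposition already establishes, via \eqref{reachable1step}, that for a fixed $k$ the set $R_{t_{k-1},t_k}(-Y(t_k))$ coincides exactly with the reachable set of the one-step BS$\Delta$I \eqref{1step}. Both implications in the corollary are therefore obtained by applying this equivalence once at each $k\in\{1,\ldots,K\}$, and the terminal portion is handled by the translative identity $R_{t_K}(X)=-X+R_{t_K}(0)$ which is recorded at the beginning of Section~\ref{sec:bsdi}.

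For the forward direction, fix $X\in\lcal_{t_K}^d$ and suppose $(Y(t))_{t\in\mathbb{T}}$ satisfies $Y(t_K)\in R_{t_K}(X)$ and $Y(t_{k-1})\in R_{t_{k-1},t_k}(-Y(t_k))$ for every $k$. For each $k$ I would invoke the ``$\subseteq$'' inclusion in \eqref{reachable1step} to extract some $\psi(t_{k-1})\in(\lcal_{t_{k-1}}^d)^{\mathcal{I}}$ for which \eqref{1step} holds, thereby assembling the one-step BS$\Delta$I system over all $k$. The terminal condition $Y(t_K)\in -X+R_{t_K}(0)$ is then just a restatement of $Y(t_K)\in R_{t_K}(X)$.

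For the converse direction, assume a process $(Y(t))_{t\in\mathbb{T}}$ and family $\psi(t_{k-1})$ satisfy the one-step BS$\Delta$I together with the terminal inclusion. The terminal identity immediately yields $Y(t_K)\in -X+R_{t_K}(0)=R_{t_K}(X)$, while for each $k$ the ``$\supseteq$'' inclusion in \eqref{reachable1step} forces $Y(t_{k-1})\in R_{t_{k-1},t_k}(-Y(t_k))$.

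Finally, for the multi-step version, I would telescope. For each $\ell\in\{k+1,\ldots,K\}$ pick a selector $V(t_{\ell-1})\in G(t_{\ell-1},\psi(t_{\ell-1}))$ that realizes the one-step inclusion as an equation $Y(t_{\ell-1})-Y(t_\ell)=V(t_{\ell-1})\Delta t_\ell-\sum_{I\in\mathcal{I}}\psi_I(t_{\ell-1})\Delta M_I(t_\ell)$. Summing over $\ell$ gives $Y(t_k)-Y(t_K)=\sum_{\ell=k+1}^K V(t_{\ell-1})\Delta t_\ell-\sum_{\ell=k+1}^K\sum_{I\in\mathcal{I}}\psi_I(t_{\ell-1})\Delta M_I(t_\ell)$, and since each $V(t_{\ell-1})\in G(t_{\ell-1},\psi(t_{\ell-1}))$, the right-hand side lies in the Minkowski sum $\sum_{\ell=k+1}^K G(t_{\ell-1},\psi(t_{\ell-1}))\Delta t_\ell-\sum_{\ell=k+1}^K\sum_{I\in\mathcal{I}}\psi_I(t_{\ell-1})\Delta M_I(t_\ell)$, which rearranges to the claimed multi-step BS$\Delta$I. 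I do not anticipate a substantive obstacle here: the entire corollary is a direct bookkeeping consequence of Proposition~\ref{nonlocalprop}, and the only mild point of care is keeping track that the selection $V(t_{\ell-1})$ exists for each $\ell$ (which is guaranteed by the one-step inclusion) so that the telescoped sum can be interpreted as membership in a Minkowski sum of adapted sets.
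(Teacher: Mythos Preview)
Your proposal is correct and follows essentially the same route as the paper: the paper also invokes translativity for the terminal condition, appeals to Proposition~\ref{nonlocalprop} for the two one-step implications, and obtains the multi-step inclusion by telescoping $Y(t_k)=Y(t_K)+\sum_{\ell=k+1}^K(Y(t_{\ell-1})-Y(t_\ell))$ and using the one-step inclusions term by term. Your explicit choice of selectors $V(t_{\ell-1})$ to witness the Minkowski-sum membership is just a slightly more spelled-out version of the same telescoping argument.
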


\begin{proof}
By translativity, $R_{t_K}(X)=-X+R_{t_K}(0)$. From this and Proposition~\ref{nonlocalprop}, the two claims about the one-step BS$\Delta$I follow immediately. The claim about the multi-step BS$\Delta$I follows by iterating the one-step version. Indeed, for each $k\in\cb{0,\ldots,K-1}$, we have
\begin{align*}
Y(t_k) &= Y(t_K) +\sum_{\ell=k+1}^{K}(Y(t_{\ell-1})-Y(t_{\ell}))\\
&\in Y(t_K) + \sum_{\ell=k+1}^{K}G\of{t_{\ell-1},(\psi_I(t_{\ell-1}))_{I\in\mathcal{I}}}\Delta t_{\ell}-\sum_{\ell=k+1}^{K} \sum_{I\in\mathcal{I}}\psi_I(t_{\ell-1})\Delta M_I(t_{\ell}),
\end{align*}
as desired.
\end{proof}

Note that Corollary~\ref{nonlocalcor} shows that each process $(Y(t))_{t\in\mathbb{T}}$ that is a solution of the one-step BS$\Delta$I is a ``path'' in the dynamic risk measure evaluated at $X$, and vice versa. Moreover, such processes are also solutions of the multi-step BS$\Delta$I. The next corollary provides a partial converse to the latter statement.

\begin{corollary}\label{multistepcor}
	Let $X\in\lcal_{t_K}^d$ and $k\in\cb{0,\ldots,K}$. Consider the multi-step BS$\Delta$I
	\begin{align}
	&Y(t_k)\in Y(t_K)+\sum_{\ell=k+1}^K G(t_{\ell-1},\psi(t_{\ell-1}))\Delta t_\ell-\sum_{\ell=k+1}^K\sum_{I\in\mathcal{I}}\psi_I(t_{\ell-1})\Delta M_I(t_\ell).\label{multistepinc}
	\end{align}
	Then, the set $R_{t_k}(X)$ coincides with the reachable set of \eqref{multistepinc}, that is,
	\begin{equation}\label{reachableclaim}
	R_{t_k}(X)= \left\{Y(t_k) \in \lcal_{t_k}^d \; \left| \; \begin{array}{l}\text{\eqref{multistepinc} holds for some } Y(t_K)\in -X+R_{t_K}(0),\\ \psi(t_{k})\in(\lcal_{t_k}^d)^{\mathcal{I}},\ldots,\psi(t_{K-1})\in(\lcal_{t_{K-1}}^d)^{\mathcal{I}} \end{array}\right.\right\}.
	\end{equation}
\end{corollary}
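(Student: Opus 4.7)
The plan is to prove the two inclusions separately, using Corollary~\ref{nonlocalcor} (which handles the one-step inclusions plus adaptedness) together with the recursive relation \eqref{eq:mptc-recursive} for multiportfolio time consistency. Induction on $K-k$ is an alternative packaging, but a direct telescoping argument seems cleaner.

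For the inclusion $R_{t_k}(X)\subseteq$ \{reachable set\}, I would take $Y(t_k)\in R_{t_k}(X)$ and apply the recursive identity $R_{t_k}(X)=R_{t_k,t_{k+1}}[-R_{t_{k+1}}(X)]$ iteratively: there exists $Y(t_{k+1})\in R_{t_{k+1}}(X)$ with $Y(t_k)\in R_{t_k,t_{k+1}}(-Y(t_{k+1}))$; in turn there exists $Y(t_{k+2})\in R_{t_{k+2}}(X)$ with $Y(t_{k+1})\in R_{t_{k+1},t_{k+2}}(-Y(t_{k+2}))$, and so on until $Y(t_K)\in R_{t_K}(X)=-X+R_{t_K}(0)$. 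The resulting process $(Y(t_\ell))_{\ell=k}^K$ is adapted, $Y(t_K)\in -X+R_{t_K}(0)$, and each consecutive pair satisfies $Y(t_{\ell-1})\in R_{t_{\ell-1},t_\ell}(-Y(t_\ell))$, so Corollary~\ref{nonlocalcor} provides drivers $\psi(t_{\ell-1})$ such that the one-step BS$\Delta$Is hold; summing them (the multi-step conclusion of that corollary) gives \eqref{multistepinc}.

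For the reverse inclusion, suppose \eqref{multistepinc} holds with witnesses $Y(t_K)\in -X+R_{t_K}(0)$, $\psi(t_{\ell-1})\in(\lcal_{t_{\ell-1}}^d)^{\mathcal{I}}$, and $V(t_{\ell-1})\in G(t_{\ell-1},\psi(t_{\ell-1}))\subseteq\lcal_{t_{\ell-1}}^d$ realizing the equality
\[
Y(t_k)=Y(t_K)+\sum_{\ell=k+1}^K V(t_{\ell-1})\Delta t_\ell-\sum_{\ell=k+1}^K\sum_{I\in\mathcal{I}}\psi_I(t_{\ell-1})\Delta M_I(t_\ell).
\]
I would define intermediate values by the partial sum
\[
Y(t_{k'})\coloneqq Y(t_K)+\sum_{\ell=k'+1}^K V(t_{\ell-1})\Delta t_\ell-\sum_{\ell=k'+1}^K\sum_{I\in\mathcal{I}}\psi_I(t_{\ell-1})\Delta M_I(t_\ell),\quad k'\in\cb{k,\ldots,K}.
\]
The main obstacle is adaptedness of these $Y(t_{k'})$: a priori the formula only yields $\F^M_{t_K}$-measurability. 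I would resolve this by noting the forward recurrence $Y(t_{k'})=Y(t_{k'-1})-V(t_{k'-1})\Delta t_{k'}+\sum_{I\in\mathcal{I}}\psi_I(t_{k'-1})\Delta M_I(t_{k'})$, which shows $Y(t_{k'})\in\lcal_{t_{k'}}^d$ inductively starting from the assumption $Y(t_k)\in\lcal_{t_k}^d$, since $V(t_{k'-1})\in\lcal_{t_{k'-1}}^d$ and the $\Delta M_I(t_{k'})$ increments are $\F^M_{t_{k'}}$-measurable.

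Once the adapted process $(Y(t_{k'}))_{k'=k}^K$ is in hand, each consecutive pair satisfies the one-step BS$\Delta$I with driver $G$, so Proposition~\ref{nonlocalprop} yields $Y(t_{\ell-1})\in R_{t_{\ell-1},t_\ell}(-Y(t_\ell))$ for every $\ell\in\cb{k+1,\ldots,K}$. Starting from $Y(t_K)\in R_{t_K}(X)=-X+R_{t_K}(0)$ and invoking the recursive identity \eqref{eq:mptc-recursive} backwards in time, we deduce $Y(t_{K-1})\in R_{t_{K-1}}(X)$, then $Y(t_{K-2})\in R_{t_{K-2}}(X)$, and eventually $Y(t_k)\in R_{t_k}(X)$, which completes the proof.
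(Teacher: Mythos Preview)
Your proof is correct and follows essentially the same route as the paper: both arguments reduce to the one-step result (Proposition~\ref{nonlocalprop}) and then invoke multiportfolio time consistency via \eqref{eq:mptc-recursive}. The paper packages the reverse inclusion as a backward induction on the reachable sets $\bar{R}_{t_k}(X)$, whereas you construct the intermediate values $Y(t_{k'})$ explicitly by partial sums and verify their adaptedness through the forward recurrence; this is exactly the content hidden in the paper's ``second equality'' of the inductive step, so your version is, if anything, slightly more explicit on that point. One minor imprecision: Corollary~\ref{nonlocalcor} is stated for a full process on $\mathbb{T}$ rather than on $\{t_k,\ldots,t_K\}$, so for the forward inclusion it is cleaner to invoke Proposition~\ref{nonlocalprop} directly at each step $\ell\in\{k+1,\ldots,K\}$ and then telescope, rather than citing Corollary~\ref{nonlocalcor} as such.
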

\begin{proof}
	For each $k\in\cb{0,\ldots,K-1}$, let us denote by $\bar{R}_{t_k}(X)$ the reachable set on the right-hand of \eqref{reachableclaim}. We will show that $\bar{R}_{t_k}(X)=R_{t_k}(X)$ for every $k\in\cb{0,\ldots,K-1}$ by backward induction on $k$. For the base case, we have $\bar{R}_{t_{K-1}}(X)=R_{t_{K-1}}(X)$ directly from Corollary~\ref{nonlocalcor}. For the inductive case, let $k\in\cb{1,\ldots,K-1}$ and assume that $\bar{R}_{t_k}(X)=R_{t_k}(X)$. Consider inclusion~\eqref{multistepinc} at time $t_{k-1}$, i.e., the inclusion
	\begin{align}
	&Y(t_{k-1})\in Y(t_K)+\sum_{\ell=k}^K G(t_{\ell-1},\psi(t_{\ell-1}))\Delta t_\ell-\sum_{\ell=k}^K\sum_{I\in\mathcal{I}}\psi_I(t_{\ell-1})\Delta M_I(t_\ell).\label{multistepinc-1}
	\end{align}
	 Then,
	\begin{align*}
	&\bar{R}_{t_{k-1}}(X)\\
	&= \left\{Y(t_{k-1}) \in \lcal_{t_{k-1}}^d \; \left| \; \begin{array}{l}\text{\eqref{multistepinc-1} holds for some } Y(t_K)\in -X+R_{t_K}(0),\\ \psi(t_{k-1})\in(\lcal_{t_{k-1}}^d)^{\mathcal{I}},\psi(t_{k})\in(\lcal_{t_{k}}^d)^{\mathcal{I}},\ldots,\psi(t_{K-1})\in(\lcal_{t_{K-1}}^d)^{\mathcal{I}} \end{array}\right.\right\}\\
	&= \left\{Y(t_{k-1}) \in \lcal_{t_{k-1}}^d \; \left| \; \begin{array}{l} Y(t_{k-1}) \in Y(t_k) + G(t_{k-1},\psi(t_{k-1}))\Delta t_k - \sum_{I \in \mathcal{I}} \psi_I(t_{k-1})\Delta M_I(t_k) \\ \text{holds for some }Y(t_k) \in R_{t_k}(X), \; \psi(t_{k-1}) \in (\lcal_{t_{k-1}}^d)^{\mathcal{I}}\end{array}\right.\right\}\\
	&= \left\{Y(t_{k-1}) \in R_{t_{k-1},t_k}(-Y(t_k)) \; | \; Y(t_k) \in R_{t_k}(X)\right\}\\
	&= R_{t_{k-1},t_k}[-R_{t_k}(X)] = R_{t_{k-1}}(X).
	\end{align*}
	In this calculation, the second equality follows from applying the induction hypothesis; the third equality follows from Proposition~\ref{nonlocalprop}; and the last equality follows from multiportfolio time consistency.
	\end{proof}

The driver function $G$ defined in \eqref{nonlocaldefn} can be considered \emph{nonlocal} for the following two reasons: its second argument is a $d\times|\mathcal{I}|$-dimensional random vector (rather than a deterministic vector) and its output is a set of random vectors (rather than a deterministic set). We first aim to rewrite the BS$\Delta$I in Corollary~\ref{nonlocalcor} using a \emph{semi-local} driver $g$ whose second argument is a deterministic vector but the output is still a set of random vectors. Let us define $g\colon\cb{t_0,\ldots,t_{K-1}}\times (\R^d)^{\mathcal{I}}\to 2^{\lcal^d_{t_K}}$ by
\begin{equation}\label{semilocal}
g\of{t_{k-1},z} =  \frac{1}{\Delta t_{k}}R_{t_{k-1},t_k}\of{-\sum_{I\in\mathcal{I}} z_I\Delta M_I(t_k)}
\end{equation}
for each $k\in\cb{0,\ldots,K-1}, z=(z_I)_{I\in\mathcal{I}}\in(\R^d)^{\mathcal{I}}$.

To connect the two drivers $G,g$, we also define a special type of composition of $g$, through its second argument, with a random vector. Let $\ell\in\cb{0,\ldots,K}$ and denote by $\A(t_{\ell})\subseteq 2^{\O}$ the partition of $\O$ that generates $\F^M_{t_{\ell}}$, which is of size $2^{\ell m}$. Let $\psi=(\psi_I)_{I\in\mathcal{I}}\in (\lcal_{t_{\ell}}^d)^{\mathcal{I}}$. Given $I\in\mathcal{I}$ and $A\in\A(t_{\ell})$, note that $\psi_I$ is constant on $A$ so that we can define $\psi^A_I\coloneqq \psi_I(\o)$ for some $\o\in A$, which is free of the choice of $\o\in A$, and write $\psi^A=(\psi^A_{I})_{I\in\mathcal{I}}\in(\R^d)^{\mathcal{I}}$.

For $k\in\cb{1,\ldots,K}$ and $\psi\in(\lcal^d_{t_{k-1}})^{\mathcal{I}}$, we define
	\[
	g\circ (t_{k-1},\psi) \coloneqq \sum_{A\in\A(t_{k-1})} g(t_{k-1},\psi^A)\ind_{A}.
	\]
	The next lemma states that this composition coincides with the nonlocal driver $G$ in \eqref{nonlocaldefn}.

\begin{lemma}\label{localvsnonlocal}
	For every $k\in\cb{1,\ldots,K}$ and $\psi=(\psi_I)_{I\in\mathcal{I}}\in (\lcal_{t_{k-1}}^d)^{\mathcal{I}}$, it holds
	\[
	G(t_{k-1},\psi) = g\circ (t_{k-1},\psi).
	\]
	\end{lemma}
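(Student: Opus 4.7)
The plan is to reduce the claim to a direct consequence of the decomposability property of the conditional risk measure $R_{t_{k-1},t_k}$, which was recorded in Section~\ref{sec:mptc} following the definition of a dynamic risk measure. Fix $k \in \{1,\ldots,K\}$ and $\psi = (\psi_I)_{I \in \mathcal{I}} \in (\lcal_{t_{k-1}}^d)^{\mathcal{I}}$. Since each $\psi_I$ is $\F^M_{t_{k-1}}$-measurable and $\A(t_{k-1})$ is the finite partition that generates $\F^M_{t_{k-1}}$, each $\psi_I$ is constant on every atom $A \in \A(t_{k-1})$. First I would write the pathwise decomposition
\[
-\sum_{I \in \mathcal{I}} \psi_I \, \Delta M_I(t_k) = \sum_{A \in \A(t_{k-1})} \ind_A \left(-\sum_{I \in \mathcal{I}} \psi_I^A \, \Delta M_I(t_k)\right),
\]
so that the argument of $R_{t_{k-1},t_k}$ in the definition of $G(t_{k-1},\psi)$ has been separated into an $\F^M_{t_{k-1}}$-measurable mixture of deterministic-coefficient terms.

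Second, I would apply decomposability across the finite partition $\A(t_{k-1})$. The property as stated in the paper covers a two-set partition, but it extends by an easy induction on the size of $\A(t_{k-1})$ (which has cardinality $2^{(k-1)m}$) to arbitrary finite $\F^M_{t_{k-1}}$-measurable partitions. Applied here, this yields
\[
R_{t_{k-1},t_k}\!\left(-\sum_{I \in \mathcal{I}} \psi_I \, \Delta M_I(t_k)\right) = \sum_{A \in \A(t_{k-1})} \ind_A \, R_{t_{k-1},t_k}\!\left(-\sum_{I \in \mathcal{I}} \psi_I^A \, \Delta M_I(t_k)\right).
\]

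Third, dividing both sides by $\Delta t_k$ and recognizing the right-hand side via the definition of $g$ in \eqref{semilocal} gives
\[
G(t_{k-1},\psi) = \sum_{A \in \A(t_{k-1})} \ind_A \, g(t_{k-1},\psi^A) = g \circ (t_{k-1},\psi),
\]
which is the desired equality.

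The only mildly subtle step is the extension of decomposability from the two-set version to the full finite partition $\A(t_{k-1})$; this is a standard induction but should be mentioned, since one uses at each step that $\ind_A$, $A \in \A(t_{k-1})$, are $\F^M_{t_{k-1}}$-measurable and partition $\O$, so that finite sums $\sum_A \ind_A X_A$ with $X_A \in \lcal_{t_k}^d$ fall within the scope of the two-set property applied repeatedly. No additional regularity of $R_{t_{k-1},t_k}$ is needed, so the lemma is essentially a direct bookkeeping exercise once decomposability is invoked.
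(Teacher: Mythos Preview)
Your proposal is correct and follows essentially the same approach as the paper: both arguments write the argument of $R_{t_{k-1},t_k}$ as an $\F^M_{t_{k-1}}$-measurable mixture over the atoms $\A(t_{k-1})$, apply decomposability to pull the indicators outside, and then identify the resulting sum with $g\circ(t_{k-1},\psi)$. Your additional remark on extending decomposability from the two-set case to finite partitions by induction is a nice clarification that the paper leaves implicit.
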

	
\begin{proof}
	By the decomposability of $R$, we have
	\begin{align*}
	G(t_{k-1},\psi)&=\frac{1}{\Delta t_{k}}R_{t_{k-1},t_k}\of{-\sum_{I\in\mathcal{I}} \psi_I\Delta M_I(t_k)}\\
	&=\frac{1}{\Delta t_{k}}R_{t_{k-1},t_k}\of{\sum_{A\in\A(t_{k-1})} \of{- \sum_{I\in\mathcal{I}} \psi_I\Delta M_I(t_k)}\ind_{A}}\\
	&=\frac{1}{\Delta t_{k}}R_{t_{k-1},t_k}\of{\sum_{A\in\A(t_{k-1})}\of{ -\sum_{I\in\mathcal{I}} \psi^A_I\Delta M_I(t_k)}\ind_{A}}\\
	&=\frac{1}{\Delta t_{k}}\sum_{A\in\A(t_{k-1})}R_{t_{k-1},t_k}\of{ -\sum_{I\in\mathcal{I}} \psi^A_I\Delta M_I(t_k)}\ind_{A}\\
	&=\sum_{A\in\A(t_{k-1})} g(t_{k-1},\psi^A)\ind_{A}=g\circ (t_{k-1},\psi),
	\end{align*}
	from which the result follows.
	\end{proof}	
	
Thanks to Lemma~\ref{localvsnonlocal}, the one-step BS$\Delta$I in Corollary~\ref{nonlocalcor} can be rewritten as
\begin{align*}
&Y(t_{k-1})\in Y(t_k)+ g\circ(t_{k-1},\psi(t_{k-1}))\Delta t_k-\sum_{I\in\mathcal{I}}\psi_I(t_{k-1})\Delta M_I(t_k),\quad k\in\cb{1,\ldots,K},\\
& Y(t_K)\in -X+R_{t_K}(0).
\end{align*}
Note that this BS$\Delta$I is a \emph{functional} inclusion where one random vector is included in a set of random vectors. In Proposition~\ref{localprop} below, we present an alternative BS$\Delta$I that is a \emph{random} inclusion with a completely local driver $\hat{g}\colon\Omega\times\cb{t_0,\ldots,t_{K-1}}\times (\R^d)^{\mathcal{I}} \to \P_+(\R^d)$ defined by
\[
\hat{g}(\omega, t_{k-1},z) \coloneqq g(t_{k-1},z)(\omega)= \frac{1}{\Delta t_k} R_{t_{k-1},t_k}\of{-\sum_{I \in \mathcal{I}} z_I \Delta M_I(t_k)}(\omega)
\]
for each $\o\in\O,k\in\cb{1,\ldots,K},z=(z_I)_{I\in\mathcal{I}}\in(\R^d)^{\mathcal{I}}$. To begin with, we formulate the connection between the drivers $G$ and $\hat{g}$ in the next lemma.

\begin{lemma}\label{projectionrule}
	Let $k\in\cb{1,\ldots,K}$ and $\psi=(\psi_I)_{I\in\mathcal{I}}\in (\lcal_{t_{k-1}}^d)^{\mathcal{I}}$. For each $\o\in\O$, it holds
	\[
	G(t_{k-1},\psi)(\o)=\hat{g}(\o,t_{k-1},\psi(\o)).
	\]
	\end{lemma}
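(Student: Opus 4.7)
The plan is to leverage Lemma \ref{localvsnonlocal}, which already rewrites $G(t_{k-1},\psi)$ as a decomposition $\sum_{A\in\A(t_{k-1})}g(t_{k-1},\psi^A)\ind_{A}$ over the atoms of $\F^M_{t_{k-1}}$, and then simply evaluate this decomposition at a fixed $\o$.

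First, I would fix an arbitrary $\o\in\O$ and let $A(\o)\in\A(t_{k-1})$ denote the unique atom of $\F^M_{t_{k-1}}$ that contains $\o$; such an atom exists and is unique because $\F^M_{t_{k-1}}$ is finitely generated. Next, from the definition $\psi^{A(\o)}_I=\psi_I(\o')$ for any $\o'\in A(\o)$, together with the $\F^M_{t_{k-1}}$-measurability (and therefore $\A(t_{k-1})$-piecewise constancy) of each $\psi_I$, I would observe that $\psi^{A(\o)}=\psi(\o)$ as elements of $(\R^d)^{\mathcal{I}}$.

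Second, I would apply Lemma \ref{localvsnonlocal} and evaluate the decomposition pointwise: since the indicators $(\ind_A)_{A\in\A(t_{k-1})}$ form a partition of unity, only the summand corresponding to $A(\o)$ survives at $\o$, giving
\[
G(t_{k-1},\psi)(\o)=\sum_{A\in\A(t_{k-1})}g(t_{k-1},\psi^{A})(\o)\ind_{A}(\o)=g(t_{k-1},\psi^{A(\o)})(\o).
\]
Here one must interpret the product $g(t_{k-1},\psi^A)\ind_A$ on the level of random sets, which is legitimate by the random-set/decomposable-set correspondence recalled in Section~\ref{sec:mptc}; the indicator simply selects the value of the random set $g(t_{k-1},\psi^A)$ on $A$ and contributes nothing on $A^c$.

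Finally, combining the two observations and substituting into the definition of $\hat{g}$, I would conclude
\[
G(t_{k-1},\psi)(\o)=g(t_{k-1},\psi(\o))(\o)=\hat{g}(\o,t_{k-1},\psi(\o)),
\]
as required. The only delicate point — and the one I would flag in the write-up — is the pointwise interpretation of the indicator-weighted sum of random sets; but since this interpretation is already implicit in the proof of Lemma~\ref{localvsnonlocal} (which rests on decomposability of $R$), no additional machinery is needed here.
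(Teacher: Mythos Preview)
Your proposal is correct and follows essentially the same route as the paper: both invoke Lemma~\ref{localvsnonlocal} to write $G(t_{k-1},\psi)$ as $\sum_{A\in\A(t_{k-1})}g(t_{k-1},\psi^A)\ind_A$, identify the unique atom $\bar A$ containing $\o$, and reduce to $g(t_{k-1},\psi^{\bar A})(\o)=\hat g(\o,t_{k-1},\psi(\o))$. The only difference is cosmetic: where you flag the pointwise evaluation of the indicator-weighted Minkowski sum as the delicate step and appeal to the random-set correspondence, the paper spells it out explicitly as a double inclusion via $u\ind_{\{\o\}}\in g(t_{k-1},\psi^{\bar A})\ind_{\{\o\}}$, which is exactly the verification your remark anticipates.
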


\begin{proof}
	Let $\o\in\O$. We claim that
	\[
	G(t_{k-1},\psi)(\o)=\of{\sum_{A\in\A(t_{k-1})} g(t_{k-1},\psi^A)\ind_{A}}(\o)=\hat{g}(\o,t_{k-1},\psi(\o)).
	\]
	The first equality is by the calculation in the proof of Lemma~\ref{localvsnonlocal}. To prove the second equality, let $\bar{A}$ be the unique set in $\A(t_{k-1})$ for which $\o\in \bar{A}$. Let $u\in (\sum_{A\in\A(t_{k-1})} g(t_{k-1},\psi^A)\ind_{A})(\o)$. Hence,
	\[
	u\ind_{\cb{\o}}\in \sum_{A\in\A(t_{k-1})} g(t_{k-1},\psi^A)\ind_{A}\ind_{\cb{\o}} = g(t_{k-1},\psi^{\bar{A}})\ind_{\bar{A}}\ind_{\cb{\o}}=g(t_{k-1},\psi^{\bar{A}})\ind_{\cb{\o}}.
	\]
	This shows that $u\in g(t_{k-1},\psi^{\bar{A}})(\omega)=g(t_{k-1},\psi(\o))(\o)=\hat{g}(\o,t_{k-1},\psi(\o))$. Conversely, let $u\in \hat{g}(\o,t_{k-1},\psi(\o))=g(t_{k-1},\psi^{\bar{A}})(\omega)$. So
	\[
	u\ind_{\cb{\o}}\in g(t_{k-1},\psi^{\bar{A}})\ind_{\cb{\omega}}=\of{\sum_{A\in\A(t_{k-1})} g(t_{k-1},\psi^A)\ind_{A}}\ind_{\cb{\omega}},
	\]
	which shows that $u\in (\sum_{A\in\A(t_{k-1})} g(t_{k-1},\psi^A)\ind_{A})(\o)$.
	\end{proof}

With Lemma \ref{projectionrule}, we are ready to formulate the random BS$\Delta$I in the next proposition.

\begin{proposition}\label{localprop}
	Let $X\in \lcal_{t_K}^d$ and consider a process $(Y(t))_{t\in\mathbb{T}}$ such that $Y(t_K)\in R_{t_K}(X)$ and $Y(t_{k-1})\in R_{t_{k-1},t_{k}}(-Y(t_{k}))$ for each $k\in\cb{1,\ldots,K}$. Then, there exists $\psi(t_{k})=(\psi_I(t_{k}))_{I\in\mathcal{I}}\in (\lcal_{t_{k}}^d)^{\mathcal{I}}$ for each $k\in\cb{1,\ldots,K}$ such that the random BS$\Delta$I
	\begin{align*}
	&Y(\o,t_{k-1})\in Y(\o,t_k)+ \hat{g}(\o,t_{k-1},\psi(\o,t_{k-1}))\Delta t_k-\sum_{I\in\mathcal{I}}\psi_I(\o,t_{k-1})\Delta M_I(\o,t_k),\; k\in\cb{1,\ldots,K},\\
	& Y(\o,t_K)\in -X(\o)+R_{t_K}(0)(\o)
	\end{align*}
	holds for every $\o\in\O$. Conversely, if there exist a $d$-dimensional adapted process $(Y(t))_{t\in\mathbb{T}}$ and $\psi(t_{k-1})\in (\lcal_{t_{k-1}}^d)^{\mathcal{I}}$ for each $k\in\cb{1,\ldots,K}$ such that the above random BS$\Delta$I holds for every $\o\in\O$, then $Y(t_K)\in R_{t_K}(X)$ and $Y(t_{k-1})\in R_{t_{k-1},t_k}(-Y(t_k))$ for each $k\in\cb{1,\ldots,K}$. In each case, the multi-step version of the random BS$\Delta$I
		\begin{align*}
		&Y(\o,t_k)\in Y(\o,t_K)+\sum_{\ell=k+1}^K \hat{g}(\o,t_{\ell-1},\psi(\o,t_{\ell-1}))\Delta t_\ell-\sum_{\ell=k+1}^K\sum_{I\in\mathcal{I}}\psi_I(\o,t_{\ell-1})\Delta M_I(\o,t_\ell)
		\end{align*}
	holds for every $k\in\cb{0,\ldots,K-1}$ and $\o\in\O$.
\end{proposition}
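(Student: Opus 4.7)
I would chain Lemma \ref{projectionrule} with Corollary \ref{nonlocalcor}: Lemma \ref{projectionrule} identifies $G(t_{k-1},\psi)(\omega)=\hat{g}(\omega,t_{k-1},\psi(\omega))$ pointwise, so each of the three pieces of the claim---the direct implication, the converse, and the multi-step formula---should reduce to the corresponding piece of Corollary \ref{nonlocalcor}.

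For the direct direction, I would apply Corollary \ref{nonlocalcor} to the risk-measure hypotheses on $(Y(t))_{t\in\mathbb{T}}$ to produce, for each $k\in\cb{1,\ldots,K}$, some $\psi(t_{k-1})\in(\lcal_{t_{k-1}}^d)^{\mathcal{I}}$ together with a witness $V(t_{k-1})\in G(t_{k-1},\psi(t_{k-1}))$ satisfying
\[
Y(t_{k-1})=Y(t_k)+V(t_{k-1})\Delta t_k-\sum_{I\in\mathcal{I}}\psi_I(t_{k-1})\Delta M_I(t_k).
\]
Evaluating this equation at an arbitrary $\omega$ and invoking Lemma \ref{projectionrule} to see that $V(\omega,t_{k-1})\in\hat{g}(\omega,t_{k-1},\psi(\omega,t_{k-1}))$ gives a pointwise witness for the random BS$\Delta$I, with the terminal condition $Y(t_K)\in-X+R_{t_K}(0)$ carrying over unchanged. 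For the converse, the random BS$\Delta$I furnishes, for each $\omega$, a witness $v(\omega)\in\hat{g}(\omega,t_{k-1},\psi(\omega,t_{k-1}))=G(t_{k-1},\psi(t_{k-1}))(\omega)$ solving the pointwise equation; I would aggregate these into an $\F_{t_{k-1}}^M$-measurable selector $V(t_{k-1})\in G(t_{k-1},\psi(t_{k-1}))$, using the random-set correspondence for closed decomposable subsets of $\lcal_{t_{k-1}}^d$ from Section \ref{sec:mptc} combined with the finiteness of $\A(t_{k-1})$, which permits an atomwise selection. The converse direction of Corollary \ref{nonlocalcor} then yields $Y(t_{k-1})\in R_{t_{k-1},t_k}(-Y(t_k))$, and the multi-step formula follows by telescoping the one-step pointwise equations across $\ell=k+1,\ldots,K$, exactly as in the corresponding step of Corollary \ref{nonlocalcor}.

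The main obstacle is the converse. The pointwise witness $v(\omega)$, as read off the equation $v(\omega)\Delta t_k=Y(\omega,t_{k-1})-Y(\omega,t_k)+\sum_{I\in\mathcal{I}}\psi_I(\omega,t_{k-1})\Delta M_I(\omega,t_k)$, is only $\F_{t_k}^M$-measurable in general, and becomes automatically $\F_{t_{k-1}}^M$-measurable precisely when $\psi$ happens to coincide with the predictable-representation coefficients of $Y(t_k)$ supplied by Lemma \ref{genpredrep}. Constructing an $\F_{t_{k-1}}^M$-measurable selector of $\hat{g}(\cdot,t_{k-1},\psi(\cdot,t_{k-1}))$ that simultaneously certifies the equation for $Y$ is therefore the one step that genuinely goes beyond transcribing Corollary \ref{nonlocalcor} through Lemma \ref{projectionrule}.
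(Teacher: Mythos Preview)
Your strategy---chain Lemma~\ref{projectionrule} with Corollary~\ref{nonlocalcor} for each of the three pieces---is exactly the paper's. The forward direction and the telescoped multi-step formula go through just as you outline.

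On the converse, the paper is in fact terser than you are. It rewrites the one-step functional inclusion from Corollary~\ref{nonlocalcor} as
\[
Y(t_{k-1})-Y(t_k)+\sum_{I\in\mathcal{I}}\psi_I(t_{k-1})\Delta M_I(t_k)\in G(t_{k-1},\psi(t_{k-1}))\Delta t_k,
\]
observes that the right-hand set is decomposable (by decomposability of $R$), and then simply \emph{asserts} that this inclusion is equivalent to its $\omega$-wise version; the rest is read off Corollary~\ref{nonlocalcor}. The obstacle you isolate is exactly what that asserted equivalence glosses over: the decomposability-based passage from the pointwise inclusion back to the functional one requires the left-hand side to already lie in $\lcal_{t_{k-1}}^d$, and for an arbitrary $\psi(t_{k-1})$ this is not guaranteed by the hypotheses of the converse. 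Your proposed atomwise aggregation cannot repair it either, since the equation determines the witness $V$ uniquely and that unique $V$ is genuinely only $\F_{t_k}^M$-measurable when $\psi(t_{k-1})$ differs from the predictable-representation coefficients of $Y(t_k)$ (so there is nothing to aggregate). In short, you have correctly located a step the paper's own proof does not resolve; your proposal is at least as complete as the argument in the paper.
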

\begin{proof}
	Let $k\in\cb{0,\ldots,K-1}$. We rewrite the BS$\Delta$I in Corollary~\ref{nonlocalcor} as
	\[
	Y(t_{k-1})-Y(t_k)+\sum_{I\in\mathcal{I}}\psi_I(t_{k-1})\Delta M_I(t_k) \in  G(t_{k-1},\psi(t_{k-1}))\Delta t_{k}.
	\]
	By the decomposability of $R$, it follows that the set on the right is decomposable. Hence, the above inclusion is equivalent to 
	\[
	Y(\o,t_{k-1})-Y(\o,t_k)+\sum_{I\in\mathcal{I}}\psi_I(\o,t_{k-1})\Delta M_I(\o,t_k) \in  G(t_{k-1},\psi(t_{k-1}))(\o)\Delta t_{k},
	\]
	which is equivalent to the random one-step inclusion in the statement of the proposition, by the definition of $\hat{g}$. Similarly, it can be checked that the multi-step inclusion in Corollary~\ref{nonlocalcor} is equivalent to the random multi-step inclusion in the statement of the proposition. All the claims follow immediately from these equivalences and Corollary~\ref{nonlocalcor}.
	\end{proof}

\begin{remark}
    To consider the local analog of Corollary~\ref{multistepcor}, we utilize a notation from~\cite{LR14,algorithm} on the \emph{finitely generated} filtration considered herein. As this filtration is generated by the Bernoulli random walk, we can view our processes as taking value on a tree.  With that in mind, we present this local analog of Corollary~\ref{multistepcor} using the concept and notation of a tree. Recall that $\A(t_\ell) \subseteq 2^\O$ denotes the set of all atoms of $\F^M_{t_\ell}$ for $\ell \in \{0,\ldots,K\}$ and, for a given $Z\in \lcal_{t_\ell}^d$, we denote by $Z^A$ the constant value of $Z$ on $A\in\A(t_\ell)$. Additionally,  when $\ell\leq K-1$, we denote the ``successor'' nodes at time $t_{\ell+1}$ of $A_\ell\in\A(t_\ell)$ by
    \[\succ(A_{\ell}) = \cb{A_{\ell+1} \in \A(t_{\ell+1}) \; | \; A_{\ell+1} \subseteq A_\ell}.\]

    Fix $X \in \lcal_{t_K}^d$.  In this atomized notation, the one-step random BS$\Delta$I from Proposition~\ref{localprop} for $k \in \cb{1,\ldots,K}$ can be rewritten as
    \begin{align*}
    Y^{A_{k-1}}(t_{k-1}) &\in \bigcap_{A_{k} \in \succ(A_{k-1})} \left[Y^{A_{k}}(t_{k}) + \hat{g}^{A_{{k-1}}}(t_{k-1},\psi^{A_{{k-1}}}(t_{k-1}))\Delta t_{k-1} - \sum_{I \in \mathcal{I}} \psi^{A_{{k-1}}}_I(t_{k-1})\Delta M_I^{A_{k}}(t_{k})\right]\\
    Y^{A_{K}}(t_K) &\in -X^{A_{K}}+ R_{t_K}(0)(\o_K)
    \end{align*}
    for every $A_{k-1} \in \A(t_{k-1})$.  In this way, we can consider the multi-step inclusion at time $t_k$ with $k \in \cb{0,\ldots,K-1}$ and $A_k\in\A(t_k)$
    \begin{align}\label{multistepincR}
    y &\in \bigcap_{\substack{A_{k+1} \in \succ(A_k) \\ \vdots \\ A_K \in \succ(A_{K-1})}} \left[Y^{A_K}(t_K) + \sum_{\ell = k+1}^K \hat{g}^{A_{\ell-1}}(t_{\ell-1},\psi^{A_{\ell-1}}(t_{\ell-1})) \Delta t_{\ell} - \sum_{\ell = k+1}^K \sum_{I \in \mathcal{I}} \psi^{A_{\ell-1}}_I(t_{\ell-1}) \Delta M^{A_{\ell}}_I(t_\ell)\right]
    \end{align}
    for some adapted process $\psi(t_{\ell-1}) \in (\lcal_{t_{\ell-1}}^d)^{\mathcal{I}}$. (This representation is an example of the conditional core introduced in~\cite{conditionalcore}; we conjecture that the conditional core may be useful for extending \eqref{multistepincR} to, e.g., the continuous time limit.)	Similar to Corollary~\ref{multistepcor}, given $A_k\in\A(t_k)$, it can be checked that the set $R^{A_k}_{t_k}(X)=R_{t_k}(X)(\o)$, $\o\in A_k$, coincides with the reachable set of \eqref{multistepincR}, that is,
	\begin{align*}
	&R^{A_k}_{t_k}(X)= \left\{y\in\R^d \; \left| \; \begin{array}{l}\text{\eqref{multistepincR} holds for some } Y^{A_{K}}(t_K) \in -X^{A_K} + R^{A_K}_{t_K}(0), \\ \psi^{A_k}(t_{k}),\ldots,\psi^{A_{K-1}}(t_{K-1})\in(\R^d)^{\mathcal{I}}, \\ \forall A_{k+1}\in\succ(A_k),\ldots,A_{K}\in\succ(A_{K-1}) \end{array}\right.\right\}.
	\end{align*}
\end{remark}

We conclude this section with two examples on superhedging, as a follow-up of the model introduced in Example \ref{exmodel1}.

\begin{example}\label{exmodel2}
	In the setting of Example \ref{exmodel1}, let us calculate the semi-local driver $g^{SHP}$ of the superhedging risk measure $R^{SHP}$ when $d=2$ and $m=1$. Following \eqref{semilocal} and the definitions in Example \ref{exmodel1}, for each $t\in\cb{0,\ldots,T-1}$ and $z\in\R^2$, we have
	\begin{align*}
	&g^{SHP}(t,z)\\
	&=R^{SHP}_{t,t+1}(-z\Delta M(t+1))=SHP_t(z B(t+1))\\
	&=\cb{Y\in\lcal_t^2 \; | \; Y\in z B(t+1)+\sum_{s=t}^T L_d^0(\F^M_s,K_s)}\\
	&=L_d^0(\F^M_t,K_t)+\cb{Y\in\lcal_t^2 \; | \; Y\in z B(t+1)+\sum_{s=t+1}^T L_d^0(\F^M_s,K_s)}\\
	&=L_d^0(\F^M_t,K_t)+\Bigg\{Y\in\lcal_t^2 \; | \; \begin{pmatrix} Y^1\\Y^2\end{pmatrix}=\begin{pmatrix} z^1\\z^2\end{pmatrix}B(t+1)+\sum_{s=t+1}^T \sqb{b_s\begin{pmatrix} -S_s^b\\1\end{pmatrix}+ a_s\begin{pmatrix} S_s^a\\-1\end{pmatrix}},\\
	&\quad\quad\quad\quad\quad\quad\quad\quad\quad \quad\quad\quad\quad    a_s,b_s\in \lcal_{s,+}^2, s\in\cb{t+1,\ldots,T}\Bigg\}.
	\end{align*}
	Note that for each $s\in\cb{t+1,\ldots,T}$, we have
	\[
	S_s^b=S_t^b e^{\sigma B(t+1)}e^{\sigma(M(s)-M(t+1))},\quad S_s^a=S_t^a e^{\sigma B(t+1)}e^{\sigma(M(s)-M(t+1))}.
	\]
	Let us denote by $\hat{g}^{SHP}$ the local driver. By the structure of the solvency cones, it follows that $\hat{g}^{SHP}(\o,t,z)$ depends on $\o$ only through $M(t)(\o)$. Hence, similar to \eqref{Ktnotation}, we write
	\[
	\hat{g}^{SHP}(\o,t,z)=\hat{g}^{SHP}(c,t,z)
	\]
	whenever $M(t)(\o)=c$. The above calculations for the semi-local driver $g^{SHP}$ imply that
	\begin{align*}
	\hat{g}^{SHP}(c,t,z)&=K_t(c)+\Bigg[\Bigg(z+\sum_{s=t+1}^T \bigcap_{b(t+2),\ldots,b(s)\in\cb{-1,+1}}K_s(c+1+b(t+2)+\ldots+b(s))\Bigg)\\
	&\quad\quad\quad\quad \cap \Bigg(-z+\sum_{s=t+1}^T \bigcap_{b(t+2),\ldots,b(s)\in\cb{-1,+1}}K_s(c-1+b(t+2)+\ldots+b(s))\Bigg)\Bigg].
	\end{align*}
	Note that, for each $s\in\cb{t+1,\ldots,T}$, the intersections over $b(t+2),\ldots,b(s)\in\cb{-1,+1}$ has $s-t$ different cones due to the dependence on the sum $b(t+2)+\ldots+b(s)$.
	
	As a special case, let us assume that $T=2$. Then, the general expression for $\hat{g}^{SHP}$ yields
	\begin{align*}
	&\hat{g}^{SHP}(0,0,z)=K_0+[(z+K_1(1)+(K_2(2)\cap K_2(0)))\cap(-z+K_1(-1)+(K_2(0)\cap K_2(-2)))],\\
	&\hat{g}^{SHP}(1,1,z)=K_1(1)+[(z+K_2(2))\cap(-z+K_2(0))],\\
	&\hat{g}^{SHP}(-1,1,z)=K_1(-1)+[(z+K_2(0))\cap(-z+K_2(-2))].
	\end{align*}
	
	In this example, the driver is the intersection of two ``affine" set-valued functions of $z$. In the univariate setting $d=m=1$ with no transaction costs, it is worth noting that the hedging example considered in \citet[Section 3]{G06} has a driver that is linear in $z$.

\end{example}

\begin{example}\label{exmodel3}
	The calculations in Example \ref{exmodel2} can be extended to the case with $d=3$ and $m=2$. In this case, in addition to the principle random walks $M_1,M_2$, we also need the process $M_{\cb{1,2}}$ whose increments are of the form $\Delta M_{\cb{1,2}}(t)=B_{\cb{1,2}}(t)=B_1(t)B_2(t)$. Let $t\in\cb{0,1,\ldots,T-1}$ and $z=(z_1,z_2,z_{12})\in(\R^3)^{\mathcal{I}}$. For the semi-local driver, similar calculations as in Example \ref{exmodel2} yield
	\[
	g^{SHP}(t,z)=\cb{Y\in\lcal_t^3 \; | \; Y\in z_1B_1(t+1)+z_2B_2(t+1)+z_{12}B_{\cb{1,2}}(t+1)+\sum_{s=t}^T L_d^0(\F_s^M,K_s)}.
	\]
	
	For each $u\in\N$, let us define the set 
	\[
	\mathbb{S}(u)\coloneqq\cb{\sum_{s=1}^u b(s) \; | \; b(s)\in\cb{-1,+1}, s\in\cb{1,\ldots,u}}.
	\]
	For instance, $\mathbb{S}(1)=\cb{-1,+1}$, $\mathbb{S}(2)=\cb{-2,0,+2}$, $\mathbb{S}(3)=\cb{-3,-1,+1,+3}$. For convenience, let us also set $\mathbb{S}(0)=\cb{0}$. Let $\o\in\O$. To express the local driver $\hat{g}^{SHP}$, we write
	\[
	K_t(\o)=K_t(c_1,c_2),\quad\hat{g}^{SHP}(\o,t,z)=\hat{g}^{SHP}(c_1,c_2,t,z)
	\]
	whenever $M_1(t)(\o)=c_1$ and $M_2(t)(\o)=c_2$. Then, $\hat{g}^{SHP}$ is given by
	\begin{align*}
	&\hat{g}^{SHP}(c_1,c_2,t,z)=K_t(c_1,c_2)\\
	&\quad\quad\quad\quad\quad\quad\quad\quad\quad +\Bigg[\Bigg( z_1+z_2+z_{12}+\sum_{s=t+1}^T \bigcap_{b_1,b_2\in \mathbb{S}(s-t-1)}K_s\of{c_1+1+b_1 ,c_2+1+b_2}\Bigg)	\\
	&\quad\quad\quad\quad\quad\quad\quad\quad\quad \cap\Bigg(z_1-z_2-z_{12}+\sum_{s=t+1}^T \bigcap_{b_1,b_2\in \mathbb{S}(s-t-1)}K_s\of{c_1+1+b_1 ,c_2-1+b_2}\Bigg)	\\
	&\quad\quad\quad\quad\quad\quad\quad\quad\quad \cap \Bigg(-z_1+z_2-z_{12}+\sum_{s=t+1}^T \bigcap_{b_1,b_2\in \mathbb{S}(s-t-1)}K_s\of{c_1-1+b_1 ,c_2+1+b_2}	\Bigg)\\
	&\quad\quad\quad\quad\quad\quad\quad\quad\quad \cap\Bigg(-z_1-z_2+z_{12}+\sum_{s=t+1}^T \bigcap_{b_1,b_2\in \mathbb{S}(s-t-1)}K_s\of{c_1-1+b_1 ,c_2-1+b_2}\Bigg)\Bigg].
	\end{align*}
	As a special case, for $T=2$, we obtain
	\begin{align*}
	&\hat{g}^{SHP}(0,0,0,z)\\
	&=K_0+\Big[\big(z_1+z_2+z_{12}+K_1(1,1)+(K_2(2,2)\cap K_2(2,0)\cap K_2(0,2)\cap K_2(0,0))\big)\\
	&\quad\quad\quad \cap\big(z_1-z_2-z_{12}+K_1(1,-1)+(K_2(2,0)\cap K_2(2,-2)\cap K_2(0,0)\cap K_2(0,-2))\big)\\
	&\quad\quad\quad \cap\big(-z_1+z_2-z_{12}+K_1(-1,1)+(K_2(0,2)\cap K_2(-2,2)\cap K_2(0,0)\cap K_2(-2,0))\big)\\
	&\quad\quad\quad \cap\big(-z_1-z_2+z_{12}+K_1(-1,-1)+(K_2(0,0)\cap K_2(0,-2)\cap K_2(-2,0)\cap K_2(-2,-2))\big)\Big]
	\end{align*}
	for $t=0$ and
	\begin{align*}
	&\hat{g}^{SHP}(1,1,1,z)=K_1(1,1)+\Big[(z_1+z_2+z_{12}+K_2(2,2))\cap(z_1-z_2-z_{12}+K_2(2,0))\\
	&\quad\quad\quad\quad\quad\quad\quad\quad\quad\quad\quad\quad\quad \cap(-z_1+z_2-z_{12}+K_2(0,2))\cap(-z_1-z_2+z_{12}+K_2(0,0))\Big],\\
	&\hat{g}^{SHP}(1,-1,1,z)=K_1(1,-1)+\Big[(z_1+z_2+z_{12}+K_2(2,0))\cap(z_1-z_2-z_{12}+K_2(2,-2))\\
	&\quad\quad\quad\quad\quad\quad\quad\quad\quad\quad\quad\quad\quad \cap(-z_1+z_2-z_{12}+K_2(0,0))\cap(-z_1-z_2+z_{12}+K_2(0,-2))\Big],\\
	&\hat{g}^{SHP}(-1,1,1,z)=K_1(-1,1)+\Big[(z_1+z_2+z_{12}+K_2(0,2))\cap(z_1-z_2-z_{12}+K_2(0,0))\\
	&\quad\quad\quad\quad\quad\quad\quad\quad\quad\quad\quad\quad\quad \cap(-z_1+z_2-z_{12}+K_2(-2,2))\cap(-z_1-z_2+z_{12}+K_2(-2,0))\Big],\\
	&\hat{g}^{SHP}(-1,-1,1,z)=K_1(-1,-1)+\Big[(z_1+z_2+z_{12}+K_2(0,0))\cap(z_1-z_2-z_{12}+K_2(0,-2))\\
	&\quad\quad\quad\quad\quad\quad\quad\quad\quad\quad\quad\quad\quad \cap(-z_1+z_2-z_{12}+K_2(-2,0))\cap(-z_1-z_2+z_{12}+K_2(-2,-2))\Big]
	\end{align*}
	for $t=1$.
	\end{example}

\section{Set-Valued Backward Stochastic Difference Equation}\label{sec:sv-bsde}

In this section, we show that a given dynamic set-valued risk measure in discrete time gives rise to a set-valued backward stochastic difference equation (SV-BS$\Delta$E). In contrast to the BS$\Delta$I presented in the prior section, which is evaluated for the selectors of the solution set, a SV-BS$\Delta$E provides the dynamics for the entire set as a ``singular'' object.  As depicted in~\eqref{eq:mptc-recursive}, multiportfolio time consistency provides a recursive relation for the \emph{set} of acceptable capital allocations.  Therefore, since the dynamic programming principle for set-valued risk measures is defined for the full set rather than selectors, we wish to consider the SV-BS$\Delta$E which encodes the defining relation of multiportfolio time consistency for dynamic risk measures.

We wish to highlight that the general theory of set-valued difference and differential equations are typically restricted to the space of compact and convex sets.  However, as previously discussed, risk measures are naturally upper sets and thus require further consideration.  As far as the authors are aware, the method for constructing a set-valued difference (or differential in the limit) equation via an intersection of halfspaces is novel to this work.

In the setting of Section~\ref{sec:discrete}, we work with the filtered probability space $(\O,\F, (\F^M_{t})_{t\in\mathbb{T}}, \Pr)$, where $M$ is the $m$-dimensional Bernoulli random walk defined in \eqref{randomwalkdefn}. Let us consider a multiportfolio time consistent dynamic set-valued convex risk measure $R=(R_{t})_{t\in\mathbb{T}}$ with one-step conditional risk measures $R_{t_{k-1},t_{k}}\colon \lcal_{t_{k}}^d\to \P_+(\lcal_{t_{k-1}}^d)$, $k\in\cb{1,\ldots,K}$. For the terminal risk measure $R_{t_K}\colon \lcal_{t_K}^d\to \P_+(\lcal_{t_K}^d)$, note that we have $R_{t_K}(X)=-X+R_{t_K}(0)$ for each $X\in \lcal_{t_K}^d$.

As with our discussion of the BS$\Delta$I, we relate $R$ to a SV-BS$\Delta$E with a \emph{nonlocal} driver. To that end, let us introduce the domain
\[
\mathbb{D}_E \coloneqq \cb{(t_k,\psi,w) \; | \; k \in \cb{0,\ldots,K-1}, \; \psi=(\psi_I)_{I\in\mathcal{I}} \in (\lcal_{t_k}^d)^{\mathcal{I}}, \; w \in \lcal_{t_{k+1},+}^d}.
\]
Note that this domain has an additional variable than that for the driver of the BS$\Delta$I introduced by \eqref{dominc} in the prior section.
With this, define the set-valued driver $G_E: \mathbb{D}_E\to 2^{\lcal^d}$ by
\begin{equation}\label{eq:SV-driver}
G_E(t_{k-1},\psi,w) := \frac{1}{\Delta t_k} R_{t_{k-1},t_k}\left[-\sum_{I \in \mathcal{I}} \psi_I \Delta M_I(t_k) - \Gamma_{t_k}(w)\right]
\end{equation}
for each $(t_{k-1},\psi,w) \in \mathbb{D}_E$ with $k \in \{1,...,K\}$ and where 
\begin{equation}\label{condhs}
\Gamma_{t_k}(w) := \{u \in \lcal_{t_k}^d \; | \; w^\T u \geq 0\}.
\end{equation}
Note that $G_E$ is adapted such that $G_E(t,\psi,w) \in \P_+(\lcal_t^d)$ whenever $(t,\psi,w) \in \mathbb{D}_E$.

The following lemma is required for the results about SV-BS$\Delta$Es. We separate it from the main results for readability.
\begin{lemma}\label{prop:H}
	Let $X \in \lcal_{t_K}^d$ and define
	\[
	H_{t_k}(X,w_{t_k}) := \cl[R_{t_{k}}(X) + \Gamma_{t_{k}}(w_{t_{k}})]
	\]
	for $w_{t_{k}} \in \lcal_{t_{k},+}^d$, where the conditional halfspace $\Gamma_{t_{k}}(w_{t_{k}})$ is defined by \eqref{condhs}. Then, for any $k \in \{1,\ldots,K\}$, the relation
	\begin{equation}\label{RofH}
	R_{t_{k-1}}(X) = \bigcap_{w_{t_k} \in \lcal_{t_k,+}^d} R_{t_{k-1},t_k}\left[-H_{t_k}(X,w_{t_k})\right]
	\end{equation}
	holds.
\end{lemma}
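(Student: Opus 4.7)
My plan is to prove the two inclusions separately. For the forward inclusion $R_{t_{k-1}}(X) \subseteq \bigcap_{w_{t_k}} R_{t_{k-1},t_k}[-H_{t_k}(X,w_{t_k})]$, I observe that $0 \in \Gamma_{t_k}(w_{t_k})$ for every $w_{t_k} \in \lcal_{t_k,+}^d$, so $R_{t_k}(X) \subseteq R_{t_k}(X)+\Gamma_{t_k}(w_{t_k}) \subseteq H_{t_k}(X,w_{t_k})$. Negating yields $-R_{t_k}(X) \subseteq -H_{t_k}(X,w_{t_k})$, and since $R_{t_{k-1},t_k}[\cdot]$ is a union over its argument and hence set-monotone, applying it and using the recursive form of multiportfolio time consistency \eqref{eq:mptc-recursive} gives $R_{t_{k-1}}(X) = R_{t_{k-1},t_k}[-R_{t_k}(X)] \subseteq R_{t_{k-1},t_k}[-H_{t_k}(X,w_{t_k})]$ for every $w_{t_k}$. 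Intersecting over $w_{t_k} \in \lcal_{t_k,+}^d$ finishes this direction.

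For the reverse inclusion, the key ingredient to bring in is the conditional supporting-halfspace representation $R_{t_k}(X) = \bigcap_{w_{t_k} \in \lcal_{t_k,+}^d} H_{t_k}(X, w_{t_k})$; this is the closed-convex-upper-set analogue of the classical fact that a closed convex set equals the intersection of its supporting halfspaces, with the normal directions restricted to $\lcal_{t_k,+}^d$ precisely because $R_{t_k}(X)$ is an upper set. Given $m \in \bigcap_{w_{t_k}} R_{t_{k-1},t_k}[-H_{t_k}(X, w_{t_k})]$, one extracts for each $w_{t_k}$ a witness $Y_{w_{t_k}} \in H_{t_k}(X, w_{t_k})$ with $m \in R_{t_{k-1},t_k}(-Y_{w_{t_k}})$. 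The aim is to produce a single $Y^* \in R_{t_k}(X)$ with $m \in R_{t_{k-1},t_k}(-Y^*)$, whereupon \eqref{eq:mptc-recursive} places $m$ in $R_{t_{k-1}}(X)$.

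The main obstacle is that the intersection over $w_{t_k}$ need not \emph{a priori} commute with the union defining $R_{t_{k-1},t_k}[\cdot]$. My plan is to exploit the finitely generated filtration, under which $\lcal_{t_k}^d$ is finite-dimensional, and the witness set $B_m := \cb{Y \in \lcal_{t_k}^d \; | \; m \in R_{t_{k-1},t_k}(-Y)}$ is closed by the closed-graph axiom on $R_{t_{k-1},t_k}$, conditionally convex by conditional convexity, and a lower set by monotonicity. Assuming for contradiction that $B_m \cap R_{t_k}(X) = \emptyset$, a Hahn--Banach separation of these two disjoint closed convex sets produces a direction $w^* \in \lcal_{t_k,+}^d$---forced to be in the positive cone by the opposing upper/lower set structure---with $\sup_{Y \in B_m} (w^*)^{\T} Y \leq \inf_{Y \in R_{t_k}(X)} (w^*)^{\T} Y$. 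Chasing this $w^*$ back through the witness $Y_{w^*} \in B_m \cap H_{t_k}(X, w^*)$ and invoking a mild strictness/perturbation of $w^*$ should contradict the separation. A cleaner alternative would be to scalarize via $\rho_{t_k}^{w}(X) := \essinf \cb{w^{\T} Y \; | \; Y \in R_{t_k}(X)}$ and exploit time consistency of the resulting scalar risk measure to directly transport the halfspace $H_{t_k}(X, w)$ across one time step.
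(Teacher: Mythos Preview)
Your forward inclusion is correct and matches the paper's argument exactly.

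For the reverse inclusion, your primary Hahn--Banach route has a genuine gap. Separating the disjoint closed convex sets $B_m$ and $R_{t_k}(X)$ in the finite-dimensional space $\lcal_{t_k}^d$ yields only a \emph{weak} separation: $\sup_{Y\in B_m}(w^*)^\T Y \leq \inf_{Y\in R_{t_k}(X)}(w^*)^\T Y = \rho_{t_k}(X,w^*)$. The witness $Y_{w^*}\in B_m\cap H_{t_k}(X,w^*)$ then only gives $(w^*)^\T Y_{w^*}=\rho_{t_k}(X,w^*)$, which is no contradiction. Two disjoint closed convex upper/lower sets in $\R^n$ can have distance zero (e.g.\ $\{(x,y):y\geq e^{-x}\}$ and $\{(x,y):y\leq 0\}$), so strict separation is not automatic, and ``mild strictness/perturbation'' is not a proof without a concrete mechanism exploiting additional structure you have not identified.

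The ``cleaner alternative'' you mention is in fact the route the paper takes, but it requires a nontrivial ingredient you do not name: a recursive dual formula for the scalarizations. The paper argues by contradiction from $m\notin R_{t_{k-1}}(X)$, obtaining $v\in\lcal_{t_{k-1},+}^d$ with $\Pr(v^\T m<\rho_{t_{k-1}}(X,v))>0$, and then invokes \citet[Theorem~5.3]{tcscalarmv} to write
\[
\rho_{t_{k-1}}(X,v)=\underset{\Q\in\W_{t_{k-1}}(v)}{\esssup}\Big(-\alpha_{t_{k-1},t_k}(\Q,v)+\E\big[\rho_{t_k}(X,w_{t_{k-1}}^{t_k}(\Q,v))\,\big|\,\F^M_{t_{k-1}}\big]\Big).
\]
Choosing a near-optimal $\Q^*$ and setting $w^*=w_{t_{k-1}}^{t_k}(\Q^*,v)$, the witness $Z(w^*)\in H_{t_k}(X,w^*)$ satisfies $(w^*)^\T Z(w^*)\geq\rho_{t_k}(X,w^*)$; feeding this back through the dual formula forces $m\notin R_{t_{k-1},t_k}(-Z(w^*))$. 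The essential point is that the correct direction $w^*$ at time $t_k$ is manufactured from the time-$t_{k-1}$ separating direction $v$ via the measure-change map $w_{t_{k-1}}^{t_k}(\Q^*,\cdot)$, not from a separation performed directly at time $t_k$.
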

\begin{proof}
	First, notice that $H_{t_k}(X,w_{t_k}) = \{u \in \lcal_{t_k}^d \; | \; w_{t_k}^\T u \geq \rho_{t_k}(X,w_{t_k})\}$ for the scalarization
	\[
	\rho_{t_k}(X,w_{t_k}) \coloneqq \underset{u \in R_{t_k}(X)}{\essinf}w_{t_k}^\T u.
	\]
	Therefore, by \citet[Lemma~3.18]{comparison},
	\[
	R_{t_k}(X) = \bigcap_{w_{t_k} \in \lcal_{t_k,+}^d} H_{t_k}(X,w_{t_k}).
	\]
	To prove the equality in \eqref{RofH}, we proceed as follows.
	\begin{itemize}
		\item[$\subseteq$] By multiportfolio time consistency, monotonicity, and the above notes,
		\begin{align*}
		R_{t_{k-1}}(X) &= R_{t_{k-1},t_k}\left[-R_{t_k}(X)\right]\\
		&= R_{t_{k-1},t_k}\left[\bigcap_{w_{t_k} \in \lcal_{t_k,+}^d} -H_{t_k}(X,w_{t_k})\right]\\ 
		&\subseteq \bigcap_{w_{t_k} \in \lcal_{t_k,+}^d} R_{t_{k-1},t_k}\left[-H_{t_k}(X,w_{t_k})\right].
		\end{align*}
		\item[$\supseteq$] To get a contradiction, suppose that there exists $m \not\in R_{t_{k-1}}(X)$ with
		\[
		m \in \bigcap_{w_{t_k} \in \lcal_{t_k,+}^d} R_{t_{k-1},t_k}[-H_{t_k}(X,w_{t_k})].
		\]
		In particular, for every $w_{t_k} \in \lcal_{t_k,+}^d$, there exists $Z(w_{t_k}) \in H_{t_k}(X,w_{t_k})$ such that
		\begin{equation}\label{contr}
		m \in R_{t_{k-1},t_k}(-Z(w_{t_k})).
		\end{equation} 
		Before continuing, first we need to introduce some notation.  Let $\mathcal{M}$ be the space of all probability measures that are absolutely continuous with respect to $\Pr$.  For any $\Q \in \mathcal{M}^d$, $v \in \lcal_{t_{k-1}}^d$ and $\ell\in\cb{k,\ldots,K}$, define $w_{t_{k-1}}^{t_\ell}(\Q,v) \in \lcal_{t_\ell}^d$ by
		\[w_{t_{k-1}}^{t_\ell}(\Q,v)_i = \begin{cases} \frac{v_i\E\left[\frac{d\Q_i}{d\Pr} \; | \; \F^M_{t_\ell}\right]}{\E\left[\frac{d\Q_i}{d\Pr} \; | \; \F^M_{t_{k-1}}\right]} &\text{on } \cb{\E\left[\frac{d\Q_i}{d\Pr} \; | \; \F^M_{t_{k-1}}\right] > 0} \\ v_i &\text{on } \cb{\E\left[\frac{d\Q_i}{d\Pr} \; | \; \F^M_{t_{k-1}}\right] = 0}. \end{cases}\] 
		Define the space of set-valued dual variables $\W_{t_{k-1}} := \{(\Q,v) \in \mathcal{M}^d \times \lcal_{t_{k-1},+}^d \; | \; w_{t_{k-1}}^{t_K}(\Q,v) \in \lcal_{t_K,+}^d\}$ and the projection $\W_{t_{k-1}}(v) := \{\Q \in \mathcal{M}^d \; | \; (\Q,v) \in \W_{t_{k-1}}\}$.
		Note that for any $(\Q,v) \in \W_{t_{k-1}}$, we have $w_{t_{k-1}}^{t_k}(\Q,v)\in \lcal_{t_k,+}^d$ so that $Z(w_{t_{k-1}}^{t_k}(\Q,v))\in H_{t_k}(X,w_{t_{k-1}}^{t_k}(\Q,v))$, that is,
		\begin{equation}\label{comp}
		w_{t_{k-1}}^{t_k}(\Q,v)^\T Z(w_{t_{k-1}}^{t_k}(\Q,v)) \geq \rho_{t_k}(X,w_{t_{k-1}}^{t_k}(\Q,v))
		\end{equation}
		for every $(\Q,v)\in \W_{t_{k-1}}$. From \citet[Lemma~3.18]{comparison} and \citet[Theorem~5.3]{tcscalarmv},
		\begin{align*}
		& m\notin R_{t_{k-1}}(X)=\bigcap_{v\in\lcal_{t_{k-1},+}^d}H_{t_{k-1}}(X,v)\\
		&\Rightarrow\;\exists v \in \lcal_{t_{k-1},+}^d: \; \Pr(v^\T m < \rho_{t_{k-1}}(X,v)) > 0 \\
		&\Rightarrow\; \exists v \in \lcal_{t_{k-1},+}^d:\\ &\qquad \Pr\left(v^\T m < \underset{\Q \in \W_{t_{k-1}}(v)}{\esssup} \left(-\alpha_{t_{k-1},t_k}(\Q,v) + \E\sqb{\rho_{t_k}(X,w_{t_{k-1}}^{t_k}(\Q,v)) \; | \; \F^M_{t_{k-1}}}\right)\right) > 0.
        \end{align*}
        This implies that there exists some $(\Q^*,v) \in \W_{t_{k-1}}$ such that
        \[
		\Pr\left(v^\T m < -\alpha_{t_{k-1},t_k}(\Q^*,v) + \E\sqb{\rho_{t_k}(X,w_{t_{k-1}}^{t_k}(\Q^*,v)) \; | \; \F^M_{t_{k-1}}}\right) > 0;
        \]
        fix this choice of dual variables.  It then follows that
        \begin{align*}
		&\Pr\left(v^\T m < -\alpha_{t_{k-1},t_k}(\Q^*,v) + \E\sqb{\rho_{t_k}(X,w_{t_{k-1}}^{t_k}(\Q^*,v)) \; | \; \F^M_{t_{k-1}}}\right) > 0\\
		&\Rightarrow\; \Pr\left(v^\T m < -\alpha_{t_{k-1},t_k}(\Q^*,v) + \E\sqb{w_{t_{k-1}}^{t_k}(\Q^*,v)^\T Z(w_{t_{k-1}}^{t_k}(\Q^\ast,v))  \; | \; \F^M_{t_{k-1}}}\right) > 0\quad \text{(by \eqref{comp})} \\
		&\Rightarrow\; \Pr\left(v^\T m < -\alpha_{t_{k-1},t_k}(\Q^*,v) + v^\T \E^{\Q^*}\sqb{Z(w_{t_{k-1}}^{t_k}(\Q^\ast,v))  \; | \; \F^M_{t_{k-1}}}\right) > 0 \\ 
		&\Rightarrow\; \Pr\left(v^\T m < \underset{\Q \in \W_{t_{k-1}}(v)}{\esssup} \left(-\alpha_{t_{k-1},t_k}(\Q,v) + v^\T \E^{\Q}\sqb{Z(w_{t_{k-1}}^{t_k}(\Q^\ast,v)) \; | \; \F^M_{t_{k-1}}}\right)\right) > 0 \\
		&\Rightarrow\; \Pr\left(v^\T m < \rho_{t_{k-1}}^v(-Z(w_{t_{k-1}}^{t_k}(\Q^\ast,v)))\right) > 0\\
		&\Rightarrow\; m \notin R_{t_{k-1}}(-Z(w_{t_{k-1}}^{t_k}(\Q^*,v))),
		\end{align*}
		which is a contradiction to \eqref{contr}.
	\end{itemize}
\end{proof}

The next proposition provides a SV-BS$\Delta$E associated to $R$, which is analogous to the BS$\Delta$I in Proposition~\ref{localprop}. The ``difference" in the SV-BS$\Delta$E is formulated by the notion of geometric difference between sets: for $C,D\subseteq \lcal^d$, we define their geometric difference by
\[
C -^{\scalebox{0.5}{$\bullet$}} D := \{u \in \lcal^d \; | \; u + D \subseteq C\}.
\]
Fix $k\in\cb{1,\ldots,K}$ and suppose that $C,D\subseteq\lcal_{t_k}^d$. Often we want to consider the difference modified by the halfspace
\[
(\cl[C+\Gamma_{t_k}(w)] -^{\scalebox{0.5}{$\bullet$}} D) \cap \lcal_{t_{k-1}}^d
\]
for some $w \in \lcal_{t_k,+}^d$.
It can be checked that
\begin{align*}
(\cl[C+\Gamma_{t_k}(w)]-^{\scalebox{0.5}{$\bullet$}} D) \cap\lcal_{t_{k-1}}^d &= (\cl[C+\Gamma_{t_k}(w)]-^{\scalebox{0.5}{$\bullet$}}\cl[D+\L^d_{t_{k},+}])\cap\lcal_{t_{k-1}}^d\\
&=(\cl[C+\Gamma_{t_k}(w)]-^{\scalebox{0.5}{$\bullet$}}\cl[D+\Gamma_{t_k}(w)])\cap\lcal_{t_{k-1}}^d.
\end{align*}

\begin{proposition}\label{lemma:SV-BSDE}
	Let $X \in \lcal_{t_K}^d$. 
	If $(\ycal(t))_{t \in \mathbb{T}}$ is a process such that $\ycal(t_k) = R_{t_k}(X)$ for each $k \in \{0,\ldots,K\}$, then the SV-BS$\Delta$E 
	\begin{align*}
	\ycal(t_{k-1}) &= \bigcap_{\substack{w_{t_k} \in \lcal_{t_k,+}^d,\\\psi \in \Psi(t_{k-1},w_{t_k})}} \left[\begin{array}{l}G_E(t_{k-1},\psi,w_{t_k}) \Delta t_k\\ + \left(\cl\left[\ycal(t_k) + \Gamma_{t_k}(w_{t_k})\right] -^{\scalebox{0.5}{$\bullet$}} \left[\sum_{I \in \mathcal{I}} \psi_I \Delta M_I(t_k) + \lcal_{t_k,+}^d \right]\right) \cap \lcal_{t_{k-1}}^d\end{array}\right],\\
	\ycal(t_K) &= -X + R_{t_K}(0)
	\end{align*}
	holds, where
	\begin{equation}\label{const}
	\Psi(t_{k-1},w_{t_k}) \coloneqq \bigcup_{\xi \in \lcal_{t_{k-1}}^d} \left\{\psi \in (\lcal_{t_{k-1}}^d)^{\mathcal{I}} \; | \; \xi + \sum_{I \in \mathcal{I}} \psi_I \Delta M_I(t_k) + \Gamma_{t_k}(w_{t_k}) = \cl[\ycal(t_k) + \Gamma_{t_k}(w_{t_k})]\right\}.
	\end{equation}
	(In the above SV-BS$\Delta$E, we make the convention that the intersection over $\Psi(t_{k-1},w_{t_k}) $ gives $\lcal^d_{t_{k-1}}$ in case $\Psi(t_{k-1},w_{t_k}) =\emptyset$.)
	
	Conversely, if there exists a set-valued process $(\ycal(t))_{t \in \mathbb{T}}$ and a set $\Psi(t_{k-1},w_{t_k}) \subseteq (\lcal_{t_{k-1}}^d)^{\mathcal{I}}$ for each $k \in \{1,\ldots,K\}$ and $w_{t_k} \in \lcal_{t_k,+}^d$ such that the above SV-BS$\Delta$E holds, then $\ycal(t_k) = R_{t_k}(X)$ for each $k \in \{0,\ldots,K\}$.  
\end{proposition}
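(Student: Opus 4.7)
The plan is to prove both directions of the proposition by reducing, at each time step $k\in\{1,\ldots,K\}$, to the single identity
\[
R_{t_{k-1},t_k}[-H_{t_k}(X,w_{t_k})] = \bigcap_{\psi \in \Psi(t_{k-1},w_{t_k})}\!\!\left[G_E(t_{k-1},\psi,w_{t_k})\Delta t_k + \left(H_{t_k}(X,w_{t_k}) -^{\scalebox{0.5}{$\bullet$}} \Big[\sum_{I\in\mathcal{I}}\psi_I \Delta M_I(t_k) + \lcal_{t_k,+}^d\Big]\right) \cap \lcal_{t_{k-1}}^d\right],
\]
where I abbreviate $H_{t_k}(X,w_{t_k}) := \cl[\ycal(t_k)+\Gamma_{t_k}(w_{t_k})]$ under the running hypothesis $\ycal(t_k)=R_{t_k}(X)$. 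Once this identity is in hand, intersecting over $w_{t_k}\in\lcal_{t_k,+}^d$ and invoking Lemma~\ref{prop:H} converts the left-hand side into $R_{t_{k-1}}(X)$, which matches the right-hand side of the SV-BS$\Delta$E at time $t_{k-1}$. The first (forward) direction of the proposition follows immediately, and the converse follows by a backward induction starting from the terminal condition $\ycal(t_K)=-X+R_{t_K}(0)=R_{t_K}(X)$.

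For the ``$\subseteq$'' inclusion in the identity, I would pick any $m\in R_{t_{k-1},t_k}[-H_{t_k}(X,w_{t_k})]$, so that $m\in R_{t_{k-1},t_k}(-\eta)$ for some $\eta\in H_{t_k}(X,w_{t_k})$, and any $\psi\in\Psi(t_{k-1},w_{t_k})$ together with its witness $\xi\in\lcal_{t_{k-1}}^d$ (so $\xi+\sum_{I\in\mathcal{I}}\psi_I\Delta M_I(t_k)+\Gamma_{t_k}(w_{t_k})=H_{t_k}(X,w_{t_k})$). Writing $\eta=\xi+\sum_{I\in\mathcal{I}}\psi_I\Delta M_I(t_k)+v$ with $v\in\Gamma_{t_k}(w_{t_k})$, translativity applied to the $\F^M_{t_{k-1}}$-measurable vector $\xi$ gives $m-\xi\in R_{t_{k-1},t_k}(-\sum_{I\in\mathcal{I}}\psi_I\Delta M_I(t_k)-v)\subseteq G_E(t_{k-1},\psi,w_{t_k})\Delta t_k$, while the containment $\lcal_{t_k,+}^d\subseteq\Gamma_{t_k}(w_{t_k})$ together with the defining equation of the witness places $\xi$ itself in the geometric-difference factor. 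For the ``$\supseteq$'' inclusion, I would fix $\psi\in\Psi(t_{k-1},w_{t_k})$ and decompose $m=v_\psi+u_\psi$ with $v_\psi\in G_E(t_{k-1},\psi,w_{t_k})\Delta t_k$ (say $v_\psi\in R_{t_{k-1},t_k}(-\sum_{I\in\mathcal{I}}\psi_I\Delta M_I(t_k)-u)$ for some $u\in\Gamma_{t_k}(w_{t_k})$) and $u_\psi$ in the geometric difference; running translativity in reverse then yields $m\in R_{t_{k-1},t_k}(-\eta)$ for $\eta:=u_\psi+u+\sum_{I\in\mathcal{I}}\psi_I\Delta M_I(t_k)$, and $\eta\in H_{t_k}(X,w_{t_k})$ because $u_\psi+\sum_{I\in\mathcal{I}}\psi_I\Delta M_I(t_k)$ already lies in $H_{t_k}(X,w_{t_k})$ (take $0\in\lcal_{t_k,+}^d$ in the geometric difference) and the halfspace $H_{t_k}(X,w_{t_k})$ absorbs $u\in\Gamma_{t_k}(w_{t_k})$.

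The main obstacle is verifying that the intersection over $\psi\in\Psi(t_{k-1},w_{t_k})$ really cuts the right-hand side of the identity down to $R_{t_{k-1},t_k}[-H_{t_k}(X,w_{t_k})]$: different choices of $\psi$ correspond to different predictable-representation parametrizations of the same halfspace, so one must confirm carefully that the geometric-difference term exactly captures the freedom in the ``constant'' $\xi$ of each parametrization and that no spurious elements survive the intersection. A secondary technicality is the nonemptiness of $\Psi(t_{k-1},w_{t_k})$, which I would handle by applying Lemma~\ref{genpredrep} to any fixed element of $H_{t_k}(X,w_{t_k})=\{u\in\lcal_{t_k}^d:w_{t_k}^\T u\geq\rho_{t_k}(X,w_{t_k})\}$ and then possibly adjusting the resulting constant term so that the halfspace equality in the definition of $\Psi$ is achieved; this ensures that the empty-intersection convention in the statement of the proposition is never actually triggered under the standing hypotheses.
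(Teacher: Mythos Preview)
Your overall strategy---reduce to Lemma~\ref{prop:H} and then, for each $w_{t_k}$, rewrite $R_{t_{k-1},t_k}[-H_{t_k}(X,w_{t_k})]$ via translativity and the predictable representation of the boundary of $H_{t_k}(X,w_{t_k})$---is exactly the route the paper takes, and your element-wise $\subseteq$/$\supseteq$ arguments are correct and equivalent to the paper's direct computation.

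The genuine gap is your handling of $\Psi(t_{k-1},w_{t_k})=\emptyset$. You assert this never happens, but it does: the scalarization $\rho_{t_k}(X,w_{t_k})=\essinf_{u\in R_{t_k}(X)} w_{t_k}^\T u$ can take the value $-\infty$ on one or more atoms of $\F^M_{t_k}$ even when $w_{t_k}\in\lcal_{t_k,+}^d\setminus\{0\}$ (this occurs whenever $w_{t_k}$ fails to lie in the polar of the recession cone of $R_{t_k}(X)$ on that atom). On such an atom $H_{t_k}(X,w_{t_k})$ equals $\R^d$, while any candidate $\xi+\sum_I\psi_I\Delta M_I(t_k)+\Gamma_{t_k}(w_{t_k})$ is a shifted proper halfspace there; no choice of $(\xi,\psi)$ can match, so $\Psi(t_{k-1},w_{t_k})=\emptyset$ and your per-$w_{t_k}$ identity fails (its right side is $\lcal_{t_{k-1}}^d$ by convention, the left side generally is not). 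Your proposed fix of applying Lemma~\ref{genpredrep} to ``any fixed element'' on the boundary presupposes that $\rho_{t_k}(X,w_{t_k})\in\lcal_{t_k}^1$, which is precisely what can fail.

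The paper closes this gap with an extra argument you are missing: it introduces $W_{t_k}(X)=\{w\in\lcal_{t_k,+}^d\mid H_{t_k}(X,w)(\omega)\neq\R^d\text{ for all }\omega\}$, shows (using normalization of $R$, so that $R_{t_k}(X)(\omega)\neq\R^d$ on every atom) that for any $w_{t_k}\notin W_{t_k}(X)$ one can construct $\tilde w_{t_k}\in W_{t_k}(X)$ with $H_{t_k}(X,w_{t_k})\supseteq H_{t_k}(X,\tilde w_{t_k})$, and hence that the intersection over $\lcal_{t_k,+}^d$ in Lemma~\ref{prop:H} already equals the intersection over $W_{t_k}(X)$. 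On $W_{t_k}(X)$ your identity and nonemptiness argument are valid; for $w_{t_k}\notin W_{t_k}(X)$ one has $\Psi(t_{k-1},w_{t_k})=\emptyset$ and the convention makes those terms $\lcal_{t_{k-1}}^d$, which is harmless in the outer intersection. You should also treat separately the case $\ycal(t_k)=\emptyset$, as the paper does.
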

\begin{proof}
	Let $(\ycal(t))_{t \in \mathbb{T}}$ be a process such that $\ycal(t_k) = R_{t_k}(X)$ for each $k \in \{0,\ldots,K\}$. By construction of the risk measure, the terminal condition of the SV-BS$\Delta$E trivially holds.
	Consider now $k \in \{1,\ldots,K\}$. If $\ycal(t_k) = \emptyset$ then, by multiportfolio time consistency, $\ycal(t_{k-1}) = \emptyset$ as well. As such SV-BS$\Delta$E is satisfied trivially prior to $t_k$; for the remainder of this proof, we will assume that $\ycal(t_k) \neq \emptyset$.  
	
	Utilizing Lemma~\ref{prop:H}, where $H_{t_k}(X,w_{t_k}) = \cl[\ycal(t_k) + \Gamma_{t_k}(w_{t_k})]$, we get
	\[
	\ycal(t_{k-1}) = \bigcap_{w_{t_k} \in \lcal_{t_k,+}^d} R_{t_{k-1},t_k}[-H_{t_k}(X,w_{t_k})].
	\]
	Let
	\begin{equation}\label{Wdefn}
	W_{t_k}(X)\coloneqq \cb{w\in\lcal_{t_k,+}^d \; | \; \forall\omega\in\O\colon H_{t_k}(X,w_{t_k})(\omega)\neq \R^d}.
	\end{equation}
	We claim that
	\begin{equation}\label{W(X)}
	\ycal(t_{k-1}) = \bigcap_{w_{t_k} \in W_{t_k}(X)} R_{t_{k-1},t_k}[-H_{t_k}(X,w_{t_k})].
	\end{equation}
	Indeed, the $\subseteq$ part of the above equality is clear. To prove the $\supseteq$ part, let $w_{t_k}\in \lcal_{t_k,+}^d\setminus W_{t_k}(X)$. Recall from the previous section that $\A(t_k)$ denotes the partition of $\O$ that generates $\F_{t_k}^M$. Given $A\in\A(t_k)$, note that $\ycal(t_{k})(\omega)$ is the same nonempty subset of $\R^d$ for all $\o\in A$; let us denote this set by $\ycal^A(t_k)$.  First, we will assume that $\ycal^A(t_k) \neq \R^d$. Since $\ycal^A(t_k)\notin\cb{\emptyset,\R^d}$, there exists $w^A\in\R^d_+\sm\cb{0}$ such that
	\[
	\inf_{m\in \ycal^A(t_k)}(w^A)^\T m \in\R.
	\]
	Similarly, let us denote by $H_{t_k}^A(X,w_{t_k})=H_{t_k}(X,w_{t_k})(\o)$ for all $\o\in A$, $A\in\A(t_k)$. Since $w_{t_k}\notin W_{t_k}(X)$, the set $\mathcal{B}=\cb{A\in\A(t_k) \; | \; H_{t_k}^A(X,w_{t_k})=\R^d}$ is nonempty. Let
	\[
	\tilde{w}_{t_k}\coloneqq \sum_{A\in \mathcal{B}}w^A\ind_{A}+\sum_{A\in\A(t_k)\backslash\mathcal{B}}w_{t_k}\ind_A \in W_{t_k}(X)
	\]
	Then, it can be checked that $H_{t_k}(X,w_{t_k})\supseteq H_{t_k}(X,\tilde{w}_{t_k})$. By monotonicity, it follows that
	\[
	R_{t_{k-1},t_k}[-H_{t_k}(X,w_{t_k})]\supseteq R_{t_{k-1},t_k}[-H_{t_k}(X,\tilde{w}_{t_k})].
	\]
	Hence, the $\supseteq$ part of \eqref{W(X)} follows so long as $\ycal^A(t_k) \neq \R^d$ for every $A \in \A(t_k)$.  This must be true due to the construction of $\ycal(t_k)$.  In particular,
    \[\ycal(t_k) := R_{t_k}(X) \subseteq R_{t_k}(\|X\|_{\infty}) = R_{t_k}(0) - \|X\|_{\infty}\]
    due to monotonicity and translativity of the risk measure with 
    \[\|X\|_{\infty} = \left(\max_{\omega \in \Omega} |X_1(\omega)| \; , \; \ldots \; , \; \max_{\omega \in \Omega} |X_d(\omega)|\right).\]
    Therefore, for any $A \in \A(t_k)$ and $\omega \in A$, $\ycal^A(t_k) = \R^d$ implies $R_{t_k}(0)(\omega) = \R^d$.  However, this violates normalization of the risk measure and, as such, $\ycal^A(t_k) \neq \R^d$ for every $A \in \A(t_k)$.

Let $w_{t_k}\in W_{t_k}(X)$. By construction of $H_{t_k}(X,w_{t_k})$ and Lemma~\ref{genpredrep}, $H_{t_k}(X,w_{t_{k}})$ has the predictable representation
\begin{equation}
H_{t_k}(X,w_{t_{k}}) 
= \left[\xi(t_{k-1},w_{t_k}) + \sum_{I \in \mathcal{I}} \psi_I(t_{k-1},w_{t_k})\Delta M_I(t_k)\right] + \Gamma_{t_k}(w_{t_k})
\end{equation}
for some $\xi(t_{k-1},w_{t_k}),\psi_I(t_{k-1},w_{t_k}) \in \lcal_{t_{k-1}}^d, I \in \mathcal{I}$. Consider the set $\Psi(t_{k-1},w_{t_k})$ of all constructors of $H_{t_k}(X,w_{t_k})$ defined by \eqref{const}. By construction, for $\psi \in \Psi(t_{k-1},w_{t_k})$, there exists some $\xi(t_{k-1},w_{t_k},\psi) \in \lcal^d_{t_{k-1}}$ such that $H_{t_k}(X,w_{t_k}) = \xi(t_{k-1},w_{t_k},\psi) + \sum_{I \in \mathcal{I}} \psi_I \Delta M_I(t_k) + \Gamma_{t_k}(w_{t_k})$. Therefore,
\begin{align*}
&R_{t_{k-1},t_k}[-H_{t_k}(X,w_{t_k})]\\
 &= \bigcap_{\psi \in \Psi(t_{k-1},w_{t_k})} R_{t_{k-1},t_k}\left[-\xi(t_{k-1},w_{t_k},\psi) - \sum_{I \in \mathcal{I}} \psi_I \Delta M_I(t_k) - \Gamma_{t_k}(w_{t_k})\right] \\
&= \bigcap_{\psi \in \Psi(t_{k-1},w_{t_k})} R_{t_{k-1},t_k}\left[-\xi(t_{k-1},w_{t_k},\psi) - \sum_{I \in \mathcal{I}} \psi_I \Delta M_I(t_k) - \Gamma_{t_k}(w_{t_k}) - \Gamma_{t_{k-1}}(w_{t_k})\right] \\
&= \bigcap_{\psi \in \Psi(t_{k-1},w_{t_k})} \left(R_{t_{k-1},t_k}\left[-\sum_{I \in \mathcal{I}} \psi_I \Delta M_I(t_k) - \Gamma_{t_k}(w_{t_k})\right] + \xi(t_{k-1},w_{t_k},\psi) + \Gamma_{t_{k-1}}(w_{t_k})\right).
\end{align*}

	Let $\psi \in \Psi(t_{k-1},w_{t_k})$. By construction of the conditional halfspace $\Gamma_{t_{k-1}}(w_{t_k}) := \Gamma_{t_k}(w_{t_k}) \cap \lcal_{t_{k-1}}^d$ and the set-valued subtraction, $\Gamma_{t_{k-1}}(w_{t_k}) = \left(\Gamma_{t_k}(w_{t_k}) -^{\scalebox{0.5}{$\bullet$}} \Gamma_{t_k}(w_{t_k})\right) \cap \lcal_{t_{k-1}}^d$ since $\Gamma_{t_k}(w_{t_k})$ is a convex cone.  This allows us to recover the representation
	\begin{align*}
	& R_{t_{k-1},t_k}\left[-\sum_{I \in \mathcal{I}} \psi_I \Delta M_I(t_k) - \Gamma_{t_k}(w_{t_k})\right] + \xi(t_{k-1},w_{t_k},\psi) + \Gamma_{t_{k-1}}(w_{t_k})\\
	&=  G_E(t_{k-1},\psi,w_{t_k})\Delta t_k + \xi(t_{k-1},w_{t_k},\psi) + \sum_{I \in \mathcal{I}} \psi_I \Delta M_I(t_k) + \Gamma_{t_{k-1}}(w_{t_k}) - \sum_{I \in \mathcal{I}} \psi_I \Delta M_I(t_k)  \\
	&=  G_E(t_{k-1},\psi,w_{t_k})\Delta t_k +H_{t_k}(X,w_{t_k}) - \sum_{I \in \mathcal{I}} \psi_I \Delta M_I(t_k)  \\
	&=  G_E(t_{k-1},\psi,w_{t_k})\Delta t_k  + \left(H_{t_k}(X,w_{t_k}) -^{\scalebox{0.5}{$\bullet$}} \left[\sum_{I \in \mathcal{I}} \psi_I \Delta M_I(t_k) + \Gamma_{t_k}(w_{t_k})\right]\right) \cap \lcal_{t_{k-1}}^d\\
	&= G_E(t_{k-1},\psi,w_{t_k})\Delta t_k  + \left(\cl\left[\ycal(t_k) + \Gamma_{t_k}(w_{t_k})\right] -^{\scalebox{0.5}{$\bullet$}} \left[\sum_{I \in \mathcal{I}} \psi_I \Delta M_I(t_k) + \Gamma_{t_k}(w_{t_k})\right]\right) \cap \lcal_{t_{k-1}}^d\\
	&= G_E(t_{k-1},\psi,w_{t_k})\Delta t_k  + \left(\cl\left[\ycal(t_k) + \Gamma_{t_k}(w_{t_k})\right] -^{\scalebox{0.5}{$\bullet$}} \left[\sum_{I \in \mathcal{I}} \psi_I \Delta M_I(t_k) + \lcal^d_{t_k,+}\right]\right) \cap \lcal_{t_{k-1}}^d.
	\end{align*}
Combining the above calculations yields
\[
	\ycal(t_{k-1}) = \bigcap_{\substack{w_{t_k} \in W_{t_k}(X),\\\psi \in \Psi(t_{k-1},w_{t_k})}} \left[\begin{array}{l}G_E(t_{k-1},\psi,w_{t_k}) \Delta t_k \\ + \left(\cl\left[\ycal(t_k) + \Gamma_{t_k}(w_{t_k})\right] -^{\scalebox{0.5}{$\bullet$}}\left[\sum_{I \in \mathcal{I}} \psi_I \Delta M_I(t_k) + \lcal_{t_k,+}^d \right]\right) \cap \lcal_{t_{k-1}}^d \end{array}\right].
\]
Note that for $w_{t_k}\in \lcal^d_{t_k,+}\sm W_{t_k}(X)$, the definition in \eqref{const} gives $\Psi(t_{k-1},w_{t_k})=\emptyset$ as desired, since no constructor of $H_{t_k}(X,w_{t_k})$ exists in this case. Hence, with the convention that the intersection in the SV-BS$\Delta$E over an empty index set gives $\lcal_{t_k}^d$, it follows that the SV-BS$\Delta$E is satisfied.

The converse follows by the same logic as above since the construction of $\Psi(t_{k-1},w_{t_k})$ guarantees that it is part of a predictable representation of $H_{t_k}(X,w_{t_k})$.
\end{proof}

\begin{corollary}
	Fix some positive vector $\r \in \R^d_{++}$. Let $X \in \lcal_{t_K}^d$. 
	Let $(\ycal(t))_{t \in \mathbb{T}}$ be a process such that $\ycal(t_k) = R_{t_k}(X)$ for each $k \in \{0,\ldots,K\}$. Then, for each $k\in\cb{1,\ldots,K}$, there exists a nonempty set $W(t_k)\subseteq \lcal^d_{t_k}$ such that for each $w_{t_k}\in W(t_k)$, there exists a unique pair $(\hat{\xi}(t_{k-1},w_{t_k}),\hat{\psi}(t_{k-1},w_{t_k}))\in \lcal_{t_{k-1}}\times(\lcal_{t_{k-1}})^{\mathcal{I}}$ satisfying
	\begin{equation}\label{r-unique}
	 \left[\hat{\xi}(t_{k-1},w_{t_k})+ \sum_{I \in \mathcal{I}} \hat{\psi}_I(t_{k-1},w_{t_k}) \Delta M_I(t_k)\right]\r + \Gamma_{t_k}(w_{t_k}) = \cl[\ycal(t_k) + \Gamma_{t_k}(w_{t_k})],
	\end{equation}
	and there exists no such pair for each $w_{t_k}\in \lcal^d_{t_k,+}\sm W(t_k)$. Moreover, the reformulated SV-BS$\Delta$E
	\begin{align*}
	\ycal(t_{k-1}) &= \bigcap_{w_{t_k}\in W(t_k)} \left[\begin{array}{l} G_E(t_{k-1},\hat{\psi}(t_{k-1},w_{t_k})\r,w_{t_k}) \Delta t_k \\ + \left(\cl\left[\ycal(t_k) + \Gamma_{t_k}(w_{t_k})\right] -^{\scalebox{0.5}{$\bullet$}} \left[\sum_{I \in \mathcal{I}} \hat{\psi}_I(t_{k-1},w_{t_k}) \r \Delta M_I(t_k) + \lcal_{t_k,+}^d\right]\right) \cap \lcal_{t_{k-1}}^d \end{array}\right],\\
	\ycal(t_K) &= -X + R_{t_K}(0)
	\end{align*}
	holds.
\end{corollary}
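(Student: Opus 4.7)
The strategy is to use the fixed positive direction $\r\in\R^d_{++}$ to collapse the vector-valued predictable representation parametrizing $\Psi(t_{k-1},w_{t_k})$ of Proposition~\ref{lemma:SV-BSDE} into a \emph{scalar} predictable representation, at which point uniqueness will come for free from the explicit construction in Lemma~\ref{genpredrep}. The natural candidate is $W(t_k):=W_{t_k}(X)$, the set introduced in \eqref{Wdefn} inside the proof of Proposition~\ref{lemma:SV-BSDE}; its nonemptiness was already established there from the normalization of $R$.

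For $w_{t_k}\in W(t_k)$, I would first observe that $w_{t_k}(\o)\neq 0$ on every atom, for otherwise $\Gamma_{t_k}(w_{t_k})(\o)=\R^d$ and hence $H_{t_k}(X,w_{t_k})(\o)=\R^d$, contradicting $w_{t_k}\in W_{t_k}(X)$; since $\r\in\R^d_{++}$ and $w_{t_k}\geq 0$, this gives $w_{t_k}^\T\r>0$ everywhere. By Lemma~\ref{prop:H}, $H_{t_k}(X,w_{t_k})=\cb{u\in\lcal_{t_k}^d \; | \; w_{t_k}^\T u\geq \rho_{t_k}(X,w_{t_k})}$ with $\rho_{t_k}$ a.s.\ finite. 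A set of the form $\phi\,\r+\Gamma_{t_k}(w_{t_k})$ with scalar $\phi\in\lcal_{t_k}^1$ is itself a closed halfspace in direction $w_{t_k}$, namely $\cb{u\in\lcal_{t_k}^d \; | \; w_{t_k}^\T u\geq \phi\,(w_{t_k}^\T\r)}$, and so equals $H_{t_k}(X,w_{t_k})$ if and only if $\phi\,(w_{t_k}^\T\r)=\rho_{t_k}(X,w_{t_k})$. This uniquely pins down $\phi=\rho_{t_k}(X,w_{t_k})/(w_{t_k}^\T\r)\in\lcal_{t_k}^1$. Applying Lemma~\ref{genpredrep} with $d=1$ to this scalar $\phi$ then yields a unique decomposition $\phi=\hat{\xi}(t_{k-1},w_{t_k})+\sum_{I\in\mathcal{I}}\hat{\psi}_I(t_{k-1},w_{t_k})\Delta M_I(t_k)$, with uniqueness following from the Walsh--Hadamard style invertibility that is implicit in the explicit coefficient formulas in the proof of that lemma.

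For $w_{t_k}\in\lcal_{t_k,+}^d\sm W(t_k)$, a unique pair fails to exist: on any atom where $w_{t_k}(\o)=0$ every scalar $\phi(\o)$ trivially satisfies the identity, and on any atom where $w_{t_k}(\o)\neq 0$ but $H_{t_k}(X,w_{t_k})(\o)=\R^d$ no scalar $\phi(\o)$ can equate a proper halfspace with $\R^d$. Finally, the reformulated SV-BS$\Delta$E is obtained by selecting the distinguished representer $\psi=\hat{\psi}(t_{k-1},w_{t_k})\,\r\in\Psi(t_{k-1},w_{t_k})$ in Proposition~\ref{lemma:SV-BSDE}: as shown inside that proof, the bracketed expression equals $R_{t_{k-1},t_k}[-H_{t_k}(X,w_{t_k})]$ independently of the particular $\psi\in\Psi(t_{k-1},w_{t_k})$ chosen, so restricting to this canonical scalar representer leaves each term of the intersection unchanged, and intersecting over $w_{t_k}\in W(t_k)=W_{t_k}(X)$ recovers $\ycal(t_{k-1})$ via \eqref{W(X)}. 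The main obstacle I expect is the precise bookkeeping around ``there exists no such pair'' on the complement, namely reconciling the two qualitatively different failure modes (non-uniqueness when $w_{t_k}(\o)=0$ versus non-existence when $\rho_{t_k}=-\infty$) with the exact wording of the claim.
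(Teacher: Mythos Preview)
Your approach is essentially the paper's own: take $W(t_k)=W_{t_k}(X)$, use the fixed direction $\r$ to force uniqueness of the predictable representation of $H_{t_k}(X,w_{t_k})$, and then invoke Proposition~\ref{lemma:SV-BSDE} to obtain the reformulated SV-BS$\Delta$E. The paper's proof is in fact terser than yours---it merely asserts that fixing $\r$ yields uniqueness and that no representation exists on the complement, without the scalar reduction via $\phi=\rho_{t_k}(X,w_{t_k})/(w_{t_k}^\T\r)$ that you spell out; your argument supplies the mechanism the paper leaves implicit.

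The obstacle you flag at the end is genuine and is glossed over in the paper as well. On an atom where $w_{t_k}(\o)=0$, the equation \eqref{r-unique} reduces to $\R^d=\R^d$ and is satisfied by every scalar, so a pair can exist but fails to be unique; only when $w_{t_k}(\o)\neq 0$ and $\rho_{t_k}(X,w_{t_k})(\o)=-\infty$ does existence itself fail. The paper's phrase ``there exists no such pair'' should therefore be read as ``there exists no \emph{unique} such pair'', consistent with the first half of the sentence; with that reading your analysis is complete and sharper than the paper's.
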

\begin{proof}
	Let us take $W(t_k)=W_{t_k}(X)$, where $W_{t_k}(X)$ is defined by \eqref{Wdefn}.
	Fixing the direction $\r \in \R^d_{++}$ guarantees the uniqueness of the predictable representation of $H_{t_k}(X,w_{t_k})$ for $w_{t_k} \in W(t_k)$ in the form of \eqref{r-unique}. As noted in the proof of Proposition~\ref{lemma:SV-BSDE}, no predictable representation exists of $H_{t_k}(X,w_{t_k})$ for $w_{t_k}\in \lcal_{t_k,+}^d \sm W(t_k)$.  
It follows that the reformulated SV-BS$\Delta$E is equivalent to that given in Proposition~\ref{lemma:SV-BSDE}.
\end{proof}

\begin{remark}
	The SV-BS$\Delta$Es given in Proposition~\ref{lemma:SV-BSDE} and all subsequent results of this section can utilize the dual variables $(\Q,w_{t_{k-1}}) \in \W_{t_{k-1}}$ in place of $w_{t_k} \in \lcal_{t_k,+}^d$ by considering $w_{t_k} := w_{t_{k-1}}^{t_k}(\Q,w_{t_{k-1}})$.  This follows from an application of \citet[Lemma~A.1]{mptc}.
\end{remark}

\begin{remark}
	While in the BS$\Delta$I framework we introduce semi-local and local versions, this does not appear to be possible for the SV-BS$\Delta$E setup.  Specifically, the Minkowski difference restricted to an earlier time point as taken in the SV-BS$\Delta$E cannot readily be defined $\omega$-wise. Conceptually, this can be viewed as akin to the work of \cite{bentaharlepinette} insofar as the recursive formulation of multiportfolio time consistency is defined with respect to the selectors rather than the random sets.
\end{remark}

We conclude this section with a continuation of Examples \ref{exmodel1} and \ref{exmodel2} on the superhedging risk measure.

\begin{example}\label{ex:SV-BSDE}
	In the setting of Example \ref{exmodel1}, let us calculate the set-valued driver $G_E^{SHP}$ of the superhedging risk measure $R^{SHP}$ when $d=2$ and $m=1$.  As with Examples~\ref{exmodel2} and~\ref{exmodel3}, we can generalize to larger number of assets and random walks but will focus on this simple case for illustrative purposes.  Following \eqref{eq:SV-driver} and the definitions in Example \ref{exmodel1}, for each $t\in\cb{0,\ldots,T-1}$, $\psi\in\lcal_t^2$, and $w\in\lcal_{t+1,+}^2$, we have
	\begin{align*}
	&G_E^{SHP}(t,\psi,w)\\
	&=R^{SHP}_{t,t+1}[-\psi\Delta M(t+1) - \Gamma_{t+1}(w)]=SHP_t[\psi B(t+1) + \Gamma_{t+1}(w)]\\
    &=\cb{Y\in\lcal_t^2 \; | \; Y\in \psi B(t+1) + u + \sum_{s=t}^T L_d^0(\F^M_s,K_s), \, u\in\lcal_{t+1}^2, \, w^\T u \geq 0}\\
    &=L_d^0(\F^M_t,K_t)+\cb{Y\in\lcal_t^2 \; | \; Y\in \psi B(t+1) + u + \sum_{s=t+1}^T L_d^0(\F^M_s,K_s), \, u\in\lcal_{t+1}^2, \, w^\T u \geq 0}\\
    &=L_d^0(\F^M_t,K_t)+\Bigg\{Y\in\lcal_t^2 \; | \; \begin{pmatrix} Y^1\\Y^2\end{pmatrix}=\begin{pmatrix}\psi^1\\ \psi^2\end{pmatrix}B(t+1)+\begin{pmatrix}u^1 \\ u^2\end{pmatrix} + \sum_{s=t+1}^T \sqb{b_s\begin{pmatrix} -S_s^b\\1\end{pmatrix}+ a_s\begin{pmatrix} S_s^a\\-1\end{pmatrix}},\\
	&\quad\quad\quad\quad\quad\quad\quad\quad\quad \quad\quad\quad\quad    a_s,b_s\in \lcal_{s,+}^2, s\in\cb{t+1,\ldots,T}, \, u\in\lcal_{t+1}^2, \, w^1 u^1 + w^2 u^2 \geq 0\Bigg\}.
	\end{align*}
	Recall that, as provided in Example \ref{exmodel1} and presented in Example \ref{exmodel2}, for each $s\in\cb{t+1,\ldots,T}$, we have
	\[
	S_s^b=S_t^b e^{\sigma B(t+1)}e^{\sigma(M(s)-M(t+1))},\quad S_s^a=S_t^a e^{\sigma B(t+1)}e^{\sigma(M(s)-M(t+1))}.
	\]
    As stated previously, unlike in the BS$\Delta$I setup of Section~\ref{sec:bsdi}, no \emph{local} driver can be given for the SV-BS$\Delta$E formulation of the superhedging risk measure.
\end{example}


\section{Discussion}

In this work, we introduced two backward representations for multiportfolio time consistent dynamic set-valued risk measures: a BS$\Delta$I and a SV-BS$\Delta$E.  Though both of these representations provide an equivalent dynamic risk measure, these formulations provide important insights for considering dynamic risk measures in continuous time $\mathbb{T} = [0,T]$.  Specifically, in continuous time, either a backward stochastic differential inclusion (BSDI) or set-valued backward stochastic differential equation (SV-BSDE) could, potentially, be used to characterize a dynamic risk measure.  This has yet to be examined in the literature.
Our work on studying difference inclusions and equations provides the initial insights for when these concepts are appropriate to be applied. As such, we gain knowledge of the (likely) best approach for studying risk measures in continuous time.  In fact, based on the prior analysis, it is the opinion of the authors that that BSDIs are the proper methodology to consider. While presented only for risk measures, the results presented provide insights for the dynamic programming principle in multivariate problems more generally.

By studying both the BS$\Delta$I and SV-BS$\Delta$E, we begin to understand when these approaches are appropriate.  Namely, the inclusion appears to be the appropriate method if we care specifically about singular paths.  For instance, risk measures are used to compute capital requirements.  Ultimately, the singular, implemented capital investment over time is the important result for a practitioner rather than the entire set of acceptable requirements.  In contrast, a set-valued equation appears to be the appropriate method if we care about the ``mass'' of the set itself over time, rather than any specific value in that set. Such concepts are important beyond the immediate study of risk measures, for instance, for the mean-risk problem in \cite{KR20}, where the dynamic programming principle holds for the multi-objective version but \emph{not} the traditional scalar approach.  These results indicate that the BSDI is likely to be the appropriate approach for that problem; this is left for future study.

Following from the interpretation of risk measures, we talk about the set of acceptable capital requirements.  Thus, the individual requirements are the important notion themselves in our setting.  This leads to the notion that BSDIs appear to be the appropriate methodology for us to consider.  Though our recursive formulation of multiportfolio time consistency and the dynamic programming principle are defined with respect to the risk measure of sets (i.e., $R_t(X) = R_{t,s}[-R_s(X)]$), this is ultimately defined element-wise.  We conjecture that SV-BSDEs would be the only available method if set-valued portfolios (e.g., \cite{setportfolio}) or, generally, functions of sets were considered themselves.

While we are able to construct a SV-BS$\Delta$E for this setting, it does not permit an immediate integral (summation in the discrete setting) representation when incrementing from time $t$ to $T$ directly.  
To pass to the continuous time BSDI limit of the BS$\Delta$I, we propose following the approach of \cite{stadje} in which the risk measures are ``scaled and tilted'' first.  We conjecture that convergence of the BS$\Delta$I should be sought for the space of reachable sets as constructed in Corollary~\ref{multistepcor}; this is in contrast to studying the individual paths.
However, we leave that consideration for future works.

\section*{Acknowledgments}

The authors would like to thank two anonymous referees and the associate editor for their detailed feedback which was helpful in improving the manuscript. This work was initiated and an early version of it was presented at the two meetings on Dynamic Multivariate Programming (2018, 2019) hosted and funded by the Vienna University of Economics and Business. The authors would like to thank Birgit Rudloff for organizing these meetings as well as all the participants for fruitful discussions.

\bibliographystyle{named}

\end{document}